\newcommand{\Rmnum}[1]{\expandafter\@slowromancap\romannumeral #1@}
\newcommand{\RR}{{\mathbb R}}
\newcommand{\NN}{{\mathbb N}}
\newcommand{\cE}{{\mathcal E}}
\newcommand{\cG}{{\mathcal G}}
\newcommand{\cI}{{\mathcal I}}
\newcommand{\cN}{{\mathcal N}}
\newcommand{\cR}{{\mathcal R}}
\newcommand{\cU}{{\mathcal U}}
\newcommand{\D}{{\partial }}
\newcommand\aonesq{a'{}^2}
\newcommand\atwo{a'{}'}
\newcommand\ie{{\em i.e., }}
\newtheorem{theorem}{Theorem}[section]
\newtheorem{lemma}[theorem]{Lemma}
\newtheorem{definition}[theorem]{Definition}
\newtheorem{example}[theorem]{Example}
\theoremstyle{remark}
\newtheorem{remark}[theorem]{Remark}
\numberwithin{equation}{section}
\newcommand{\beq}{\begin{equation}}
\newcommand{\eeq}{\end{equation}}
\renewcommand{\AA}{\ensuremath{\mathbb{A}}} 
\renewcommand{\a}{\alpha}
\newcommand{\s}{\sigma}
\renewcommand{\t}{\tau}
\newcommand{\NAME}{Dyson-Taylor commutator method}
\definecolor{DarkRed}{rgb}{0.8,0.3,0.6}
\definecolor{DarkerBlue}{rgb}{0.1,0.3,1}
\definecolor{Green}{rgb}{0.2,0.7,0.3}
\author{Wen Cheng}
\email{cheng@math.psu.edu}
\address{Department of Mathematics, Pennsylvania State University, University Park, PA 16802}
\author{Nick Costanzino}
\email{costanzi@math.psu.edu}
\address{Department of Mathematics, Pennsylvania State University, University Park, PA 16802}
\author{John Liechty}
\email{jcl12@psu.edu}
\address{Department of
Marketing, Smeal College of Business, and Department of Statistics,
Pennsylvania State University, University Park, PA 16802}
\author{Anna Mazzucato}
\email{mazzucat@math.psu.edu}
\address{Department of Mathematics, Pennsylvania State University, University Park, PA 16802}
\author{Victor Nistor}
\email{nistor@math.psu.edu}
\address{Department of Mathematics, Pennsylvania State University, University Park, PA 16802}
\thanks{A.M. was partially supported by NSF Grant DMS 0708902. V.N. was partially
supported by NSF grant DMS-0555831, DMS-0713743, and OCI 0749202.}
\date{\today}
\begin{document}

\title[Closed-form asymptotics for local volatility
  models]{Closed-form asymptotics for local volatility models}

\begin{abstract}
We obtain new closed-form pricing formulas for contingent claims when
the asset follows a Dupire-type local volatility model. To obtain the
formulas we use the \NAME\ that we have recently developed in
\cite{CCMN,CCCMN,CMN} for short-time asymptotic expansions of heat
kernels, and obtain a family of general closed-form approximate
solutions for both the pricing kernel and derivative price. A
bootstrap scheme allows us to extend our method to large time. We also
perform analytic as well as a numerical error analysis, and compare
our results to other known methods.
\end{abstract}

\maketitle


\tableofcontents

\section{Introduction}
Financial derivatives (also known as contingent claims) are now a
ubiquitous tool in risk management with approximately 600 trillion
dollars worth of such contracts currently in the market. The pricing
of such derivatives is therefore an active area of research in both
Mathematics and Finance (see for example
\cite{Duf,FPS,Gath,Hull,Shreve} and the references therein).  In this
paper, we will apply the perturbative (asymptotic) method introduced
in \cite{CCMN} for numerically solving parabolic equations and
  then use this method to price European options.

One of the earliest models used in pricing derivatives is the
Black-Scholes-Merton model \cite{BS,Mer}, for which the movement in
the price $X_t$ of the underlying asset on which the claim is based is
modeled by geometric Brownian motion.  For the Black-Scholes-Merton as
well as for other models given by stochastic differential equations,
the pricing of European options can be reduced to the calculation of
certain solutions of parabolic equations, obtained through Ito's Lemma
(and the change of variables $t \leftarrow T-t$) in the backward
Kolmogorov equation. The resulting equation is a Fokker-Planck
equation, which is an equation of parabolic type. Fokker-Planck
equations more generally have important applications in statistical
mechanics and in probability (see for example the monographs
\cite{Risken,Carmichael,Gardiner}). Given that the asset price is
always assumed positive, the Fokker-Planck equation is solved on the
positive half-line. One difficulty in treating this type of the
equation is that the coefficients of the Fokker--Planck operator
typically vanish at the boundary, making the equation degenerate.

For example, for the Black-Scholes-Merton model, the
resulting Fokker-Planck equation is given by
\begin{equation}\label{eq.U}
\begin{cases}
  \D_t U(t,x) - L U(t,x) = 0, & 0<t<T, \, x>0 \\
  U(0,x) =  h(x), & x>0,
\end{cases}
\end{equation}
where
\begin{align}\label{eq.LBSM}
  L := \frac{1}{2} \sigma^2 x^2 \D_x^2 + r x\D_x - r,
\end{align}
is the {\em Black-Scholes operator}, a degenerate elliptic operator,
$t$ is the time to expiry, and $h$ is the so-called {\em pay-off}
function. For a {\em European Call option with strike $K$} and expiry
(or exercice) date $T$, the pay-off function $h$ is given by the
formula $h(x_T) = |x_T-K|_{+} := \max\{x_T-K, 0\}$, where $x_T$ is the
price of the underlying asset at time $T$. Above, $\sigma$ and $r$ are
constant parameters, representing respectively the {\em volatility} of
the underlying asset, and the current {\em interest rate}. Since the
operator is degenerate at the boundary $x=0$, it can be shown that the
solution automatically vanishes there and no explicit boundary
condition need to be imposed.

A popular model related to the Black-Scholes-Merton model is the CEV
model \cite{CR}. In the CEV model, the operator $L$ is the form
\begin{equation}\label{eq.LCEV}
  L(t,x)=L(x)=\frac{1}{2}\sigma^2 x^{2\alpha} \D_x^2+rx\D_x-r,
\end{equation}
where $\sigma$, $\alpha$, $r$ are constant.  Yet another popular model
is Dupire's local volatility model, for which we allow the volatility
to change with time:
\begin{equation*}
  L(t,x)=L(x)=\frac{1}{2}\sigma^2(x,t) x^{2} \D_x^2+rx\D_x-r.
\end{equation*}

Except in special cases, such as the Black-Scholes-Merton equation above and
when $L$ has constant coefficients, very few exact solution formulas
to the problem \eqref{eq.U} are available.  It is therefore important
to devise fast, accurate approximate solution methods.  The focus of
this paper is on obtaining approximate solution methods that are fast
and accurate by combining standard numerical methods with the
asymptotic techniques developped in \cite{CCMN}.  Fast solution
methods are crucial when calibrating unknown parameters, especially in
the Baeysian inference framework. We hope to address this
  question in a forthcoming paper.

In view of the above discussion, it is justified to study the forward
initial-value problem \eqref{eq.U} for the general case when $L$ is an
operator of the form:
\begin{align}\label{eq.Lgen}
  L(t) := \frac{1}{2} a(t,x)^2 \D_x^2 + b(t,x) \D_x + c(t,x).
\end{align}
We therefore allow for {\em variable} coefficients {\em in both space
  and time}. We assume throughout that $a(x)>0$, for $x>0$ and that
the coefficients $a$, $b$, $c$ are smooth functions.  The perturbative
method introduced in \cite{CCMN} for the study of parabolic equations
in arbitrary dimensions was fully justified in the case when $a$, $b$,
and $c$ and all their derivatives are bounded, and are bounded away
from zero:\ $a(x) \geq \gamma >0$. In this paper we complete the
results of \cite{CCMN} with explicit formulas for the 1D case. Then we
numerically test our formulas for the Black-Scholes-Merton and CEV
models, obtaining an excellent agreement between our theoretical
results and the numerical tests. Both the Black-Scholes-Merton model
\eqref{eq.LBSM} and and the CEV model \eqref{eq.LCEV} are more general
than the models considered in \cite{CCMN} in that their coefficients
do not satisfy the assumptions of the paper, yet the numerical tests
indicates that the results of that paper are still valid for the more
general models considered here.  This observation suggests that the
theoretical framework of \cite{CCMN} is applicable in greater
generality. We plan to study this point in a forthcoming paper.

To explain our method, let us recall that, under certain conditions on
the operator $L$ and initial value $h$, described in details in the
next section, there exists a {\em smooth function} $\cG_t(x,y)$ such
that the solution to \eqref{eq.U} has the representation
\begin{align}\label{eq.convolution.U}
  U(t,x) = \int_0^\infty \cG_t(x,y) h(y) dy.
\end{align}
The kernel function $\cG_t(x,y)$ in \eqref{eq.convolution.U} is the
fundamental solution or the so-called {\em Green function} for the
problem \eqref{eq.U}.

\begin{remark} \label{rem.Kernel}
Given that $\cG_t(x,y)$ arises in several different contexts, we will
call the function $\cG_t(x,y)$ the {\em transition density kernel,
pricing kernel, heat kernel}, or {\em Green function}
interchangeably, depending on the context in which the object arises.
\end{remark}

As mentioned above, except for some very special cases no explicit
formulas for $\cG_t(x,y)$ or $U(t,x)$ are available.  For the
Black-Scholes-Merton model, a change of variables reduces the PDE to a
heat equation that can then be solved explicitly.  Therefore, exact
formulas for the kernel $\cG^{BSM}$ and the solution $U^{BSM}$ exist,
which we recall now for further reference:
\begin{align}\label{BSM}
\begin{split}
& \cG^{BSM}_t(x,y) = \frac{\exp(-r\, t)}{y \sqrt{2 \pi \s^2 t}}
  \exp \left(- \frac{ | \ln(x/y) + (r - \sigma^2/2) t |^2}
       {2\sigma^2 t} \right) \\ & U^{BSM}(t,x) = \int_0^\infty
       \cG^{BSM}_t(x,y) dy = x \cN(d_-) + K e^{-r\, t}\cN(d_+), \\
\end{split}
\end{align}
where $\cN(x) = \int_{-\infty}^x \frac{1}{\sqrt{2\pi}} e^{-z^2/2}$ is
the cumulative normal distribution function (cumulative Gaussian
distribution function) and
\begin{align}
  d_\pm = \frac{\ln(x/K) + (r \pm \sigma^2/2) t}{\sigma\sqrt{t}}.
\end{align}
However, for the time-dependent Black-Scholes-Merton model, where
$\sigma$ and $r$ are time-dependent, or local volatility models in
general, closed form solutions are generally given by series
expansions and difficult to use in practice or are not known (see, for
instance, \cite{DSHB, LY}).

The method that we use in this paper is to give an approximate
closed-form solution for the equation \eqref{eq.U} by giving an
approximate closed-form expansion for the Green's function
$\cG_t(x,y)$. Since our approximation of the Green's function is in
terms of Gaussian-type integrals, it gives a closed-form for the
approximate price of a European call option for {\em any}
one-dimensional model where the operator $L$ is given by
\eqref{eq.Lgen}. In fact, as an application, we give the prices and
Greeks (that is, suitable derivatives) of a European call option and
perform an error analysis in Section \ref{sec.error}.

There exists a vast literature on obtaining asymptotic expansions of
the Green's function $\cG_t(x,y)$ when $t$ small and $x$ is close to
$y$, especially in the case that $L$ is independent of time \cite{Az,
  Hsu, Kampen, McKeanSinger,Pleijel, V1, V1, Vas}. (See also
\cite{Ait,Farkas, Greiner, Melrose2, TayPDEII}). Many of these methods
are based on a geometric interpretation of the operator $L$ (or at
least its principal part) as a Laplace operator on curved space, and
require computing the geodesics in this space, which very often must
be done numerically. Other approaches are based on pseudo-differential
calculus.  In particular, Corielli, Foschi, and Pascucci \cite{CFP}
use a {\em parametrix} construction for the problem \eqref{eq.U} to
obtain a closed-form approximate solution.  We recently developed in
\cite{CCMN,CCCMN,CMN} a complementary approach to computing short-time
asymptotics for $\cG_t$, based on parabolic rescaling, Taylor's
expansions of the coefficients, Duhamel's and Dyson's formulas,
and exact commutator expansion. We called this method the {\em \NAME}.
Our method is more elementary and appears very stable in practical
implementations.

Let us fix a function $z = z(x,y)$ with the properties that $z(x,x)=x$
and all its derivatives are bounded. The function $z$ will represent
the {\em basepoint} for a parabolic rescaling of the Green's function.
Then our short-time asymptotics give an expansion for the kernel in
the form:
\begin{align}\label{eq.G.expansion}
\begin{split}
  \cG_t(x,y) &= \cG_t^{[n]}(x,y;z) +t^{(n+1)/2}
  \cE_t^{[n]}(x,y; z) \\
  \cG_t^{[n]}(x,y;z) &= G_t^{[0]}(x,y ; z ) + t^{1/2}
  G_t^{[1]}(x,y ; z ) + \cdots + t^{n/2} G^{[n]}_t(x,y ; z ), \\
\end{split}
\end{align}
where $\cG_t^{[n]}$ is the sum on the first $n$ terms of the expansion
and represents the $n$-th order approximate kernel, while
$t^{(n+1)/2}\,\cE_t^{[n]}$ is the remainder.  The first term,
$G_{t}^{[0]}$ is given by a dilated Gaussian function
\begin{equation}\label{eq.G}
  G_{t}^{[0]}(x, y) =   G_{t}^{[0]}(x-y) =
  \frac{1}{\sqrt{2 t \pi a(0,z)^2 }} \exp( - \frac{|x-y|^2}{2 t a(0,z)^2}).
\end{equation}
The \NAME\ yelds an explicit algorithm to compute the terms $G_t^{[n]}$
for {\em any} \ $n$, if $L$ is an operator of the form \eqref{eq.Lgen}
and corresponding analogs in higher dimension.

More precisely, our main result in \cite{CCMN} is that for the local
volatility operator \eqref{eq.Lgen}, the $n$-th order approximate
kernel has the form

\begin{equation} \label{eq.expansion}
   G_t^{[k]}(x,y) := t^{-1/2}
   \mathfrak{P}^k(\,z, \,z+\, \frac{x-z}{t^{1/2}}\,,
   \,\frac{x-y}{t^{1/2}}\,) G^{[0]}_t(   \,\frac{x-y}{t^{1/2}}\,),
\end{equation}

where the functions $\mathfrak{P}^\ell(z,x,y)$ are algorithmically
computable (recall that $z=z(x,y)$). In this paper we shall compute
the functions $\mathfrak{P}^k$, for $k=0,1,2$ at an {\em arbitrary}
basepoint $z$. The details, based on the \NAME\ method, can be found
in Section \ref{sec.framework} and \ref{sec.basepoints}. We therefore
obtain {\em new} closed form asymptotic expansions of the Green
function for local volatility models. In particular, the first order
asymptotic expansion at arbitrary $z=z(x,y)$ is given by
\begin{multline} \label{eq.main.example}
  \cG_t^{[1]}(x,y;z) = \frac{1}{\sqrt{2 \pi\, t\, a(0,z)^2 }} \left[ 1
    + \frac{3 a(0,z) a '(0,z) - 2 b(0,z) }{2 a(0,z)^2} (x-y)
    \right. \\
    \left.  - \frac{a'(0,z)}{2 t\, a(0,z)^3}(x-y)^3 + (x-z)
    \frac{(x-y)^2 - t\,a(0,z)^2}{t\, a(0,z)^3} \right]
  e^{ - \frac{|x-y|^2}{2 t\, a(0,z)^2}. }
\end{multline}
We provide an explicit formula for the second order expansion of the
Green function at the end of Section \ref{sec.framework}.  This
algorithm can be implemented very efficiently at least in dimension 1
and for $n$ small, $n=1$, $n=2$. The numerical tests in Section
\ref{sec.error} show that already the second-order approximation is
adequate for the Black-Scholes and CEV models.

For each term $G_t^{[k]}$ in the expansion of the Green function, let
$\cU^{[k]}$ denote the corresponding term in the expansion of the
solution,
\begin{align}\label{eq.U.formula}
  \cU^{[k]}(t, x) = \int_0^\infty G_t^{[k]}(x,y) h(y;K)dy.
\end{align}
Then using \eqref{eq.convolution.U} and \eqref{eq.G.expansion}, we
arrive at the expansion of the value of the contingent claim,

\begin{align}\label{eq.U.expansion}
\begin{split}
  & U(t,x) = U^{[n]}(t,x) +t^{(n+1)/2}{\mathfrak E}^{[n]}(t,x) \\ &
  U^{[n]}(t,x) = \cU^{[0]}(t,x) + t^{1/2} \cU^{[1]}(t,x) + t
  \cU^{[2]}(t,x) + \cdots t^{n/2} \cU^{[n]}(t,x)
  \end{split}
\end{align}
where
\begin{align}
  t^{(n+1)/2} {\mathfrak E}^{[n]}(t,x) := t^{(n+1)/2}\int_0^\infty
  \cE_t ^{[n]}h(y;K) = U(t,x) - U^{[n]}(t,x)
\end{align}

is the remainder term (or error) in the expansion of the solution.  In
\cite{CCMN} we have shown that the remainder can be controlled in
exponentially weighted Sobolev norms, when the operator $L$ is
uniformly strongly elliptic.  These bounds on the remainder imply
that, in this case, the error made by replacing $\cG_t$ with
$\cG^{[n]}_t$ in \eqref{eq.convolution.U} is of order $t^{n/2}$ {\em
  globally\/} in space, the expected optimal rate.  In \cite{CCCMN},
we consider degenerate operators, the symbol of which is strongly
elliptic with respect to some complete metric of bounded geometry.
For example, the Black-Scholes and the SABR models fit into this
framework. By contrast, the CEV model with $0<\beta<1$ does not fit
into this framework.  Our numerical tests indicate nevertheless that
the error term has the same order in $t$ even for the CEV model with
$\beta<1$.  For pedagogical purposes and error analysis we will list
all the details for the time-dependent Black-Scholes and CEV models,
although our results are more general.

In Section \ref{sec.error} we perform a numerical error analysis by
computing both the numerical solution $U$ and expansion $U^{[n]}$ and
estimating the error
\begin{align}
    | U(t,x) - U^{[n]}(t,x) |
\end{align}
pointwise for the basepoint
$z(x,y)=x$, when $n=1,2$. The error analysis is in good agreement with
the theoretical results, even though the local volatility operators
considered in this paper do not necessarily satisfy the assumptions on
the coefficients of $L$ needed to establish the analytic error
estimates performed in \cite{CCMN,CCCMN,CMN}.

In Section \ref{sec.error} we then perform an error analysis. For the
Black-Scholes-Merton model, for which an exact solution formula is
readily available, we compare the expansions at the basepoint
$z(x,y)=x$ with the exact solution. (Note however, that numerical
errors arise also in the calculation of exact solutions, due to
round-off errors and other approximations.) For the CEV model, we
compare the expansions with benchmark formulas in the literature, in
particular the Hagan-Woodward implied volatility approximation
\cite{HW}.

Given that the kernel approximation is asymptotic in time, it
guarantees good error control {\em a piori} only for sufficiently
small $t$.  In Section \ref{bootstrap}, we shall introduce a
bootstrap scheme to extend our method to arbitrary large time. This
strategy is based on the evolutionary property of the solution
operator to \eqref{eq.U}. By doing so, we show that the error is
remarkedly reduced. As an application in portfolio management, we also
compute the Greeks (or hedging parameters) of a European call option
and compare our approximations with the true Black-Scholes Greeks in
Section \ref{greeks} and Section \ref{bootstrap}. These
applications again underline the accuracy of our methods.

\subsection*{Acknowledgements}
The authors would like to thank Marco Avellaneda for valuable
suggestions and comments on the manuscript, and Jim Gatheral for
useful discussions. Victor Nistor also gladly acknowledges support
from the Max Planck Institute for Mathematics, where part of this work
has been performed.

\section{Theoretical Framework} \label{sec.framework}
We begin by recalling the \NAME, which we introduced in
\cite{CCMN,CMN}, to obtain small-time asymptotic expansions for the
solution of the initial-value problem:
\begin{align}\label{eq.U.tau}
\begin{split}
& \D_t U(t,x) - L(t) U(t,x) = 0 \\
& U(0,x) = h(x). \\
\end{split}
\end{align}
Throughout the paper, the operator $L$ will be given by
\eqref{eq.Lgen}, and we will omit the explicit dependence of $L$ and
of its coefficients on $x$. In addition, we tacitly assume that all
the coefficients of $L$ are regular enough to carry our the
manipulations described next. For a rigorous justification in the case
$L$ is not degenerate, we refer to \cite{CCMN,CMN}.

If there is a unique solution to the initial-value problem
\eqref{eq.U.tau}, then the linear operator that maps the initial data
$h$ to the solution $U$ is well defined. We refer to such operator as
the {\em solution operator}.  For constant-coefficient second-order
operators, $L_0$, the solution operator forms a {\em semigroup},
denoted by $e^{t \, L_0}$, $t>0$; that is, the solution operator has
the following properties:
\begin{enumerate} \renewcommand{\labelenumi}{(\textit{\roman{enumi}})}
 \item $e^{t\, L_0}\vert_0 =I$.
 \item $e^{t_1\, L_0} \, e^{ t_2\, L_0} = e^{(t_1 + t_2)\, L_0}$, $t_1,
 t_2>0$. \label{i.semigroup2}
\end{enumerate}
The same conclusion hold for variable-coefficient, but time-independet
operators $L$, under some conditions, for instance if $L$ is strongly
elliptic \cite{Pazy} (that is, $a(x)\geq \gamma >0$ for all $x$).
When $L$ is a time-dependent operator, $L=L(t)$, the solution operator
is no more a semigroup, but under some additional mild conditions,
forms an {\em evolution system} $\mathcal{S}(t_1,t_2)$
\cite{Lunardi,CMN}.  For an evolution system, property
(\ref{i.semigroup2}) is replaced by \ $\mathcal{S}(t_1,t_2)
\mathcal{S}(t_2,t_3) = \mathcal{S}(t_1,t_3)$, if $0\leq t_3 \leq t_2
\leq t_1$.  Following the notation set forth in the Introduction, we
denote the kernel or Green's function of the solution operator to the
problem \eqref{eq.U.tau} by $\cG^L_t$.

Our method relies heavily on the study of distribution kernels of the
evolution operators defined by our Fokker-Planck operator, so a brief
discussion of distribution kernels and of our conventions is in order.

\begin{remark} \label{rem.genkerneldef}
Given a linear operator $T$ mapping smooth functions with compact
support into distributions, there exist a {\em distribution kernel}
\ $k_T$ such that
\begin{equation}  \label{eq.genkerneldef}
     T\, u (x) = \int k_T(x,y)\, u(y) \, dy.
\end{equation}
The integral above is interpreted as the pairing between test
functions and distributions.  In this paper, we will be interested in
the integral representation \eqref{eq.genkerneldef} in the case that
$T$ is a {\em smoothing} operator, that is, an operator that maps
compactly supported distributions into smooth functions. Then, the
kernel $k_T$ is a {\em smooth function}, and the notation $k_T(x,y)$
is justified pointwise. (For a more detailed dicussion, see for
example \cite{TayPsdo}.)  In this case, we will write $T(x,y)$ to
denote the kernel $k_T(x,y)$, and in general, {\em we shall identify
  an operator with its distribution kernel}. Let $f$ be a smooth
function, then we denote the operators of multiplication by $f$ also
with $f$.  Additonally, we notice that there is no confusion when
writing $fT$ or $Tf$ since the distribution kernels of these operators
are given by $fT(x,y) = f(x)T(x,y)$ or $Tf(x,y) = T(x,y) f(y)$.
Similarly, there is no confusion when writing $\partial_x T(x,y)$,
since the distribution kernel of $\partial_x T$ is $(\partial_x k_T)
(x,y) = \partial_x (k_T(x,y))$. However $k_{T \partial_x} (x,y) =
-\partial_y k_T(x,y)$.
\end{remark}

We now introduce parabolic rescaling, which is a basic tool used in
this paper. Let $z$ be a fixed, but arbitrary point in $\RR$ and $s>0$
a parameter. Given a function $f(t,x)$ we denote by
\begin{equation}\label{eq.rescaled}
  f^{s,z}(t,x) := f(s^2t, z + s(x - z)),
\end{equation}
the {\em parabolic rescaling by $s$ of the function $f$ about
  $(0,z)$}. Thus $h^{s,z}(x) := h(z+s(x-z))$ for a function that does
not depend on $t$.  We will refer to $z$ as the {\em basepoint} for
the rescaling.  Similarly, we define a rescaled operator $L^{s,z}$ by
\begin{align}\label{eq.L^s}
  L^{s,z}(t,x) := \frac{1}{2} a^{s,z}(t,x)^2 \D_x^2 + s
  b^{s, z}(t,x) \D_x + s^2 c^{s, z}(t,x).
\end{align}
If $U$ solves the initial-value problem \eqref{eq.U.tau}, then
$U^{s,z}$ solves the rescaled problem
\begin{align}\label{eq.U.tau.s}
\begin{split}
  & \partial_t U^{s,z}(t,x) - L^{s,z}(t,x) U^{s,z}(t,x) = 0 \\
  & U^{s,z}(0,x) = h^{s,z}(x)
\end{split}
\end{align}
Consequently, the Green functions of the operator $\partial_t - L$ and
of the rescaled operator $\partial_t- L^{s,z}$ are related by
\begin{equation} \label{eq.correspondence}
\begin{aligned}
\cG_t^{L}(x,y) &= s^{-1} \cG^{L^{s,z}}_{s^{-2}t}(z + s^{-1}(x - z), z +
s^{-1}(y-z))\\ &=t^{-\frac{1}{2}}\cG_1^{L^{\sqrt{t},z}}(z +
t^{-\frac{1}{2}}(x - z), z + t^{-\frac{1}{2}}(y-z)), \text{ if } s =
t^{\frac{1}{2}}.
\end{aligned}
\end{equation}

We now proceed to compute the Green's function $G_t^{L^{s,z}}$ of the
rescaled problem \eqref{eq.U.tau.s} when $t =1$. In order to do so,
we shall consider the Taylor expansion in $s$ at $s=0$ of the rescaled
operator $L^{s,z}$, given in equation \eqref{eq.L^s}, up to order $n$.
By ``Taylor expansion'' we mean that we Taylor expand the coefficients
of $L^{s,z}$ and group all terms of the same order in
$s$. The operator $L^{s,z}$ can then be written as follows
\begin{equation}\label{eq.Taylor}
    L^{s,z} = \sum_{k=0}^{n} s^k L_k^{z} +
      s^{n+1} L_{n+1}^{s,z}(t, x),
\end{equation}
where $L_{n+1}^{s,z}(t, x)$ contains all the remainder terms from the Taylor expansion
of the coefficients.

In this paper, we concentrate on calculating explicitly the
second-order approximation of the Green function of $L$.  Hence, {\em
  we fix $n=2$ from now on}.  For notational convenience, we denote
$g'(t,x) = \frac{\partial}{\partial x} g(t,x)$ and $\dot{g}(t,x) =
\frac{\partial}{\partial t} g(t,x)$. Then the second-order Taylor
expansion in $s$ of $f^{s,z}$ at $s=0$ is given by
\begin{multline}\label{eq.Taylor.rescaled}
  f^{s,z}(t, x) = f(0, z) + s(x-z)f'(0,z)\\
  + s^2 t \dot{f}(0, z)
  + s^2 (x-z)^2 f''(0, z)/2 + s^3 r(s, t, x, z),
\end{multline}
with $s^3 r(s, t, x, z)$ the remainder.
Below $a = a(0,z)$ and all the other functions are to be evaluated
at $(0,z)$, unless stated otherwise.
We then readily have the
second order Taylor expansion of $L^{s,z}$ in $s$ at $s=0$:
\begin{equation} \label{eq.Taylor.L}
  L_0^{z} := \frac{1}{2} a^2 \partial_x^2, \quad
  L_1^z = L_1^{z}(x) := a a' (x-z) \partial_x^2 + b \partial_x,
\end{equation}
and, $L_2^{z} = L_{2,x}^{z} + t L_{2,t}^{z}$, where
\begin{equation} \label{eq.Taylor.L2}
    L_{2,x}^{z} := ( \aonesq + a \atwo )(x-z)^2 \partial_x^2/2 +
   b' (x-z) \partial_x + c, \quad L_{2,t}^{z} :=
    a\dot{a}\partial_x^2.
\end{equation}
Hence
\begin{equation*}
  L^{s,z}(t,x) = L_0^{z} + s L_1^{z}(x) + s^2 \big( L_{2, x}^{z}(x) +
  t L_{2,t}^{z} \big) + s^3 L_3^{s,z}(t,x),
\end{equation*}
where $L_3^{s,z}(t,x)$ is the remainder term.

\begin{remark}\label{remark.1}
Each $L_k^{z}$ in \eqref{eq.Taylor} has polynomial coefficients of
order $k$ in $(x-z)$ and of order $\le k/2$ in $t$. In particular,
$L_0^{z}$ is a constant coefficient operator, for which the Green's
function is computed explicitly in \eqref{eq.e^L_0}.  Thus, in order
for the expansion to capture the time dependence of the coefficients,
the coefficient must be expanded at least to second order in
$s$. Time-dependent corrections will therefore appear only at order
$s^2 =t$ in the expansion of $\cG_t^L(x,y)$.
\end{remark}

Let $\cG_t^L$ be the Green function of the parabolic problem
\eqref{eq.U.tau}, that is, the solution is given by \ $U(t,x) =
\int \cG_t(x,y) h(y) dy =: (\cG_t^L\, h)(t,x)$.

We begin the approximation scheme for $\cG^L_t$ by decomposing $L$
into a constant-coefficient, second-order operator $L_0$, for which we
can explicitly compute the solution operator, and a remainder:
\begin{align}
      L(t) = L_0 + V(t)
\end{align}
where $V(t)$ is a time-dependent, variable coefficient, second order
operator.

By Duhamel's principle we then have
\begin{align} \label{eq.S.1}
  \cG_t^L = e^{t L_0} + \int_0^t e^{(t-\t_1)L_0} V(\t_1)
  \cG_{t_1} ^Ld \t_1.
\end{align}
Repeated applications of Duhamel's formula leads to a recursive
representation of $\cG_t^L$ as a time-ordered expansion:
\begin{multline}\label{eq.time.ordered.formula}
  \cG_t^L = e^{t L_0} + \int_0^{t} e^{(t-\t_1)L_0} V(\t_1  )
  e^{\t_1 L_0} d \t_1 \\
  + \int_0^{t} \int_0^{\t_1} e^{(t -\t_1)L_0} V (\t_1 )e^{(\t_1
    -\t_2)L_0} V(\t_2 ) e^{\t_2 L_0} d\Bar{\t} + \cdots \\
  + \int_0^{t} \int_0^{\t_1} \cdots \int_0^{\t_{d-1}} e^{(t -
    \t_1)L_0} V(\t_1 ) e^{(\t_1 -\t_2)L_0} V(\t_2 ) \cdots
  V(\t_d ) e^{\t_{d} L_0} d\Bar{\tau} \\
  + \int_0^{t} \int_0^{\t_1} \cdots \int_0^{\t_{d+1}} e^{(t
    -\t_1)L_0} V(\t_1 ) e^{(\t_1 - \t_2)L_0} V(\t_2 ) \cdots
  V(\t_{d+1} ) \cG_{\t_{d+1}}^L d\Bar{\tau}
\end{multline}
where, for notational convenience, we have set $d \t_{k} \cdots d \t_2
d\t_1 = d\Bar{\tau}$.  This expansion can be rigorously justified, at
least in the case when $L$ uniformly strongly elliptic and all the
coefficients of $L$ and their derivatives are bounded. See
\cite{CCMN,CMN} for details.  In the limit $d \to \infty$, it yields
an asymptotic time-ordered series, also called a {\em Dyson series},
for the Green's function. The integer $d$ stands for the {\em
  iteration level} in the time-ordered expansion, which at this point
is distinct from the order $n$ of the Taylor expansion of the operator
$L$. For consistency we need $n\geq d$ \cite{CCMN}. {\em We set from
  now on $d=n=2$.}

A similar formula holds for the Green's function $\cG^{L^{s,z}}_t$ of
the solution operator for the rescaled problem \eqref{eq.U.tau.s}.  We
recall that it is enough to compute an approximate Green's function at
$t=1$ for the rescaled problem by \eqref{eq.correspondence}.  We now
choose the operator $L_0$ to be exactly the zeroth-order Taylor
expansion of $L^{s,z}$, given in \eqref{eq.Taylor}. Then:
\begin{equation*}
      V^{s,z}(t) : = L^{s,z}(t)- L_0 =
      s L_1^{z}(x) + s^2 L_2^{z}(t,x) + s^3
      L_3^{s,z}(t,x).
\end{equation*}
and using \eqref{eq.time.ordered.formula} with $d=n=2$ and $t =1$
yields
\begin{align}\label{eq.general.approx}
  \cG_1^{L^{s,z}} = e^{L_0^z} + s \cI_1^z + s^2 \big( \cI_{1,1}^z +
  \cI_{2,x}^z + \cI_{2,t}^z \big)+ \cR^{s,z},
\end{align}
where
\begin{align}\label{eq.I}
\begin{split}
  & \cI_1^{z} = \int_0^{1} e^{(1-\t_1)L_0^z} L_1^z e^{\t_1 L_0^z} d \t_1,\\
  & \cI_{1,1}^{z} = \int_0^{1} \int_0^{\t_1} e^{(1 -\t_1)L_0^z} L_1^z
  e^{(\t_1 -\t_2)L_0^z} L_1^z e^{\t_2 L_0^z} d\t_2 d\t_1, \\
  & \cI_{2, x}^{z} = \int_0^{1} e^{(1-\t_1)L_0^z} L_{2,x}^z e^{\t_1 L_0^z}
  d \t_1, \\
  & \cI_{2,t}^{z} = \int_0^{1} e^{(1-\t_1)L_0^z} \tau_1 L_{2,\tau}^z
  e^{\t_1 L_0^z} d \t_1.
\end{split}
\end{align}
Even though we set $t=1$, we still keep the $t$ dependence explicit in
$\cI_{2,t}^z$ to emphasize this term comes from Taylor expansion in
$t$.  The term $\cR^{s,z}$ in \eqref{eq.general.approx} contains all
the higher order terms and will be included in the remainder.

The approximation for the Green's function $\cG^L_t$
of the original problem \eqref{eq.U} is now obtained as follows. Let
\begin{equation} \label{eq.Tkerneldef}
  T^{s,z}(x,y) = :  G_{0}(x,y;z) + s\, G_{1}(x,y;z) + s^2\,
   G_{2}(x,y;z)
\end{equation}
be the distribution kernel of the operator $e^{L_0^z} + s\, \cI_1^z +
s^2 \big( \cI_{1,1}^z + \cI_{2,x}^z + \cI_{2,t}^z \big)$. The desired
second order approximation is then given by
\begin{equation} \label{eq.kernelTimedep}
  \cG_t^{[2]}(x,y) = t^{-1/2} T^{\sqrt{t},z}(z + (x-z)/\sqrt{t}, z +
  (y-z)/\sqrt{t}),
\end{equation}
where $z=z(x,y)$ is an admissible function. In particular, the kernels
$G^{[n]}_t$ appearing in \eqref{eq.G.expansion} are given by
\begin{equation*}
  G^{[n]}_t (x,y) : = t^{-1/2} G_{n}(z + (x-z)/\sqrt{t}, z +
  (y-z)/\sqrt{t}; z(x,y)).
\end{equation*}

We thus need to compute the distribution kernels of the operators
$e^{L_0^z}$, $\cI_1^z$, $\cI_{1,1}^z$, $\cI_{2,x}^z$, $\cI_{2,t}^z$.
In order to do so, we exploit the semigroup property of $e^{t\,L_0}$
to carry out explicitly the time integration in \eqref{eq.I}.  Before
we proceed, we introduce some useful notation.

By $[T_1,T_2]:=T_1 T_2 - T_2 T_1 = - [T_2, T_1]$ we shall denote the
commutator of two operators $T_1$ and $T_2$.  Two operators $T_1,T_2$
commute if $[T_1,T_2] = 0.$ For any two operators $T_1$ and $T_2$, we
define $ad_{T_1}(T_2)$ by $ad_{T_1}(T_2):=[T_1,T_2],$ and, for any
integer $j$, we define $ad^j_{T_1}(T_2)$ recursively by
\begin{equation*}
    ad^j_{T_1}(T_2):=ad_{T_1}(ad^{j-1}_{T_1}(T_2)).
\end{equation*}

We next recall a Baker-Campbell-Hausdorff-type identity proved and
used in this setting in \cite{CCMN} (note that the operators $T_i$ are
unbounded).  Namely, for any $\theta \in (0,1)$ and differential
operator $Q=Q(x,\partial)$ with {\em polynomials coefficients} in $x$,
we have
\begin{equation}
    e^{\theta L_0^{z}}Q = P_{ad}(Q,\theta,x,z,\partial)e^{\theta
      L_0^{z}},
\end{equation}
where $P_{ad}(Q,\theta,x,z,\partial)$ is a differential operator with
polynomial coefficients in $x$ given by
\begin{align}
  P_{ad}(Q, \theta,x,z,\partial) = Q +
  \sum_{i=1}^\infty\frac{\theta^i}{i!}ad_{L_0^{z}}^i(Q).
\end{align}
In proving this formula, we use the fact that the series is actually a
finite sum, as we show below. In particular, $P$ can be {\em
  explicitly} computed.

A simple calculation gives the following lemma.

\begin{lemma}\label{lemma.f.comm}
Let $L_m$ be a second-order differential operator with
polynomial coefficients of degree at most $m$. Then $ad_{L_0^{z}}^j(L_m) = 0$
for $j>m$. In particular, we have
$[L_{0}^{z}, L_{2, t}^{z}] = 0$, $ad_{L_0^{z}}^2(L_1^z) = 0$,
and $ad_{L_0^{z}}^3(L_{2, x}^z) = 0$.
\end{lemma}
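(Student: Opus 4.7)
The fundamental observation driving the proof is that $L_0^z = \tfrac{1}{2}a(0,z)^2\,\partial_x^2$ is a \emph{constant-coefficient} operator (since $z$ is fixed and $a(0,z)$ is a number). Consequently $L_0^z$ commutes with every pure differentiation $\partial_x^\ell$, so taking a commutator with $L_0^z$ acts only on polynomial coefficients via the Leibniz rule and strictly lowers their degree.

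First I would record the basic commutator identity: for any polynomial $f(x)$,
\begin{equation*}
  [L_0^z, f(x)] = \tfrac{1}{2}a(0,z)^2 \bigl( f''(x) + 2 f'(x)\,\partial_x \bigr).
\end{equation*}
If $\deg f \le k$, the resulting first-order operator has polynomial coefficients of degree at most $k-1$. I would then extend this to monomial operators: since $L_0^z$ commutes with $\partial_x^\ell$, one has $[L_0^z, f(x)\partial_x^\ell] = [L_0^z, f(x)]\,\partial_x^\ell$, which is again a differential operator with polynomial coefficients of degree at most $k-1$. By linearity, if $L_m$ is a differential operator whose polynomial coefficients all have degree at most $m$, then $ad_{L_0^z}(L_m)$ is again such an operator with coefficient degree at most $m-1$.

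Next I would run an induction on $j$: assuming $ad^{j-1}_{L_0^z}(L_m)$ has polynomial coefficients of degree at most $m - j + 1$, the previous step yields that $ad^{j}_{L_0^z}(L_m) = [L_0^z, ad^{j-1}_{L_0^z}(L_m)]$ has polynomial coefficients of degree at most $m - j$. Once $j > m$ this forces the degree to be negative, so the operator must vanish identically, proving the general claim.

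Finally I would read off the three displayed consequences: $L_{2,t}^z = a\dot{a}\,\partial_x^2$ has constant coefficients (degree $0$), so $[L_0^z, L_{2,t}^z] = 0$ already at $j=1$; $L_1^z$ has polynomial coefficients of degree at most $1$, so $ad^2_{L_0^z}(L_1^z) = 0$; and $L_{2,x}^z$ has polynomial coefficients of degree at most $2$, so $ad^3_{L_0^z}(L_{2,x}^z) = 0$. The argument is essentially mechanical; there is no serious obstacle, and the only thing to be careful about is the bookkeeping of polynomial degrees (including the extra $\partial_x$ produced at each step, which raises the differential order but is harmless since the iteration terminates by the degree-reduction mechanism).
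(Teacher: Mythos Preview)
Your proof is correct and is precisely the ``simple calculation'' the paper alludes to: the key point is that $L_0^z$ has constant coefficients, so each application of $ad_{L_0^z}$ lowers the polynomial degree of the coefficients by one, and your induction on $j$ makes this explicit. The paper offers no further detail, so there is nothing to compare.
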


\begin{proof} The proof is a simple calculation.
\end{proof}

We now proceed to compute the integrals in \eqref{eq.I}
\begin{eqnarray*}
\cI_1&=&\int_0^1 e^{(1-\t_1) L_0^{z}}L_1^{z}e^{\t_1L_0^{z}}d\t_1
=  \int_0^1 (L_1^{z}+(1-\t_1) [L_0^{z},L_1^{z}])e^{L_0^{z}}d\t_1\\
& = & (L_1^{z}+\frac{1}{2}[L_0^{z},L_1^{z}])e^{L_0^{z}},\\
\end{eqnarray*}
\begin{eqnarray*}
    \cI_{1,1}&=&\int_0^1 \int_0^{\t_1} e^{(1- \tau_1)
      L_0^{z}}L_1^{z}e^{(\tau_1-\tau_2)L_0^{z}}L_1^{z}e^{\t_2
      L_0^{z}}d\t_2 d \t_1,\\
    &=&\int_0^1
    \int_0^{\t_1}(L_1^{z}+(1-\tau_1)[L_0^{z},L_1^{z}])(L_1^{z}+
      (1-\t_2) [L_0^{z},L_1^{z}])e^{L_0^{z}}d\tau_2 d \t_1, \\
        &=& ( \frac{1}{2} (L_1^{z})^2 +
        \frac{1}{3} L_1^{z}[L_0^{z},L_1^{z}] +
        \frac{1}{6}[L_0^{z},L_1^{z}]L_1^{z} +
        \frac{1}{8}[L_0^{z},L_1^{z}]^2)e^{L_0^{z}},\\
\end{eqnarray*}
\begin{eqnarray*}
\cI_{2,x} & = &
\int_0^1 e^{(1-\t_1) L_0^{z}}L_{2,x}^{z} (\t_1)e^{\t_1L_0^{z}}d\t_1 \\ & =
&\int_0^1 (L_{2,x}^{z} +(1-\t_1)
    [L_0^{z},L_{2,x}^{z}]+\frac{(1-\t_1)^2}{2}[L_0^{z},[L_0^{z},L_{2,x}^{z}]
     ])e^{L_0^{z}}d\t
    \\ & = & \Big( L_{2,x}^{z} + \frac{1}{2}[L_0^{z},L_{2,x}^{z}] +
    \frac{1}{6}[L_0^{z},[L_0^{z},L_{2,x}^{z}]]  \Big) e^{L_0^{z}}, \\
\end{eqnarray*}
\begin{eqnarray*}
    \cI_{2,t} & = & \int_0^1 e^{(1-\t_1) L_0^{z}}\tau_1
    L_{2,\tau}^{z} e^{\t_1L_0^{z}}d\t_1 = \int_0^1 \tau_1L_{2,
      \tau}^{z}e^{L_0^{z}}d\tau_1 = \frac{1}{2}L_{2,t}^{z} e^{L_0^{z}}.
\end{eqnarray*}
Hence \eqref{eq.general.approx} becomes
\begin{equation}
    e^{L^{s,z}}=  \big(1 + sQ_1 + s^2 Q_2\big)e^{L_0^{z}} + \cR^{s,z},
\end{equation}
where
\begin{equation}
\begin{gathered}\label{eq.defQ}
  Q_1 = L_1^{z} + \frac{1}{2} [L_0^{z}, L_1^{z}], \\
  Q_2  = \frac{1}{2} (L_1^{z})^2 +  \frac{1}{3} L_1^{z}[L_0^{z}, L_1^{z}] +
  \frac{1}{6} [L_0^{z}, L_1^{z}] L_1^{z} + \frac{1}{8} [L_0^{z},
 L_1^{z}]^2, \\
  + L_{2,x}^{z} + \frac{1}{2} [L_0^{z}, L_{2, x}^{z}] + \frac{1}{6}
  [L_0^{z}, [L_0^{z}, L_{2, x}^{z}]] + \frac{1}{2} L_{2,t}^{z},
\end{gathered}
\end{equation}
and $\cR^{s,z}$ is again the error term as in \eqref{eq.general.approx}.
Therefore, we only need to compute the commutators in the above
formula to get the second-order approximation of $\cG_t^{L^{s,z}}$.

We recall that we agreed that all functions in the commutator formulas
below are evaluated at $(0,z)$. Hence $a = a(0,z)$, $a' = a'(0,z)$,
and so on. We have
\begin{equation}\label{eq.formula1}
 [L_0^{z},L_1^{z}] = a(0,z)^3 a ' (0,z) \partial_x^3 = a^3
 a'\partial_x^3, \quad [L_0^{z},L_1^{z}]^2 = a^6 \aonesq
 \partial_x^6,
\end{equation}
and hence
\begin{align}
\begin{split}
\label{eq.formula2}
 L_1^{z} [L_0^{z},L_1^{z}] & = a^4 \aonesq (x-z) \partial_x^5 + b a^3
 a' \partial_x^4, \\
 [L_0^{z},L_1^{z}]L_1^{z} & = a^4 \aonesq(x-z)\partial_x^5 + \left( b +
 3 aa' \right) a^3 a' \partial_x^4.
\end{split}
\end{align}

To compute the other commutators, we need the following lemma, which
can be proved by induction using that $[AB, C] = A[B, C] + [A, C]B$.
In particular, \ $[\partial_x^2, x-z] = 2 \partial_x$ and
$[\partial_x^2, (x-z)^2] = 2 + 4(x-z)\partial_x$.

\begin{lemma} For $i, j \ge 1$ integers we have
\begin{equation*}
  \partial_x^i (x-z)^ju(x) = \sum_{k=0}^{\min\{i,j\}} k!
  {i \choose k} {j \choose k} (x-z)^{j-k} \partial_x^{j-k} u(x).
\end{equation*}
\end{lemma}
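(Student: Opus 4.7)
The plan is to derive the identity as a direct consequence of the general Leibniz rule for iterated derivatives of a product. Recall that for any two smooth functions $f$ and $g$,
$$\partial_x^i (fg) = \sum_{k=0}^{i} \binom{i}{k} (\partial_x^k f)(\partial_x^{i-k} g).$$
Taking $f(x) = (x-z)^j$ and $g = u$ reduces the calculation to the elementary derivatives of the monomial $(x-z)^j$.

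Next I would observe that $\partial_x^k (x-z)^j = \frac{j!}{(j-k)!} (x-z)^{j-k}$ for $0 \le k \le j$ and $\partial_x^k (x-z)^j = 0$ for $k > j$. The Leibniz sum therefore truncates at $k = \min\{i,j\}$, giving
$$\partial_x^i \bigl((x-z)^j u(x)\bigr) = \sum_{k=0}^{\min\{i,j\}} \binom{i}{k} \frac{j!}{(j-k)!}(x-z)^{j-k} \partial_x^{i-k} u(x).$$
The coefficient simplifies via $\binom{i}{k}\frac{j!}{(j-k)!} = k!\binom{i}{k}\binom{j}{k}$, matching the lemma. I note in passing that the printed statement appears to contain a typo: the exponent on the rightmost $\partial_x$ should be $i-k$ rather than $j-k$, so that the orders of differentiation on the two sides agree when $i \neq j$.

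Alternatively, one can follow the inductive approach suggested in the paper, using the commutator identity $[AB,C] = A[B,C] + [A,C]B$. The base case $i=1$ is the ordinary product rule. For the inductive step, apply $\partial_x$ to the induction hypothesis and split each summand via the product rule into a polynomial-differentiation part and a $u$-differentiation part. The main obstacle is then purely combinatorial bookkeeping: after reindexing the resulting double sum, one must verify that
$$(k-1)!\binom{i}{k-1}\binom{j}{k-1}(j-k+1) + k!\binom{i}{k}\binom{j}{k} = k!\binom{i+1}{k}\binom{j}{k},$$
which collapses to Pascal's rule $\binom{i}{k-1}+\binom{i}{k} = \binom{i+1}{k}$ upon using the elementary relation $\binom{j}{k-1}(j-k+1) = k\binom{j}{k}$. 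I would prefer the Leibniz-rule route, since it bypasses this bookkeeping entirely.
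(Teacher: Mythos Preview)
Your proof is correct. The Leibniz-rule derivation is cleaner than the route the paper points to: the paper does not actually write out a proof but merely indicates that one proceeds by induction on $i$ using the commutator identity $[AB,C]=A[B,C]+[A,C]B$ (equivalently, the product rule applied one factor of $\partial_x$ at a time). Your primary argument bypasses the inductive bookkeeping entirely by invoking the general Leibniz formula once and then computing $\partial_x^k(x-z)^j$ in closed form; this is strictly shorter and requires no combinatorial identity beyond rewriting $\binom{i}{k}\frac{j!}{(j-k)!}$ as $k!\binom{i}{k}\binom{j}{k}$. Your alternative sketch of the inductive route, with the Pascal-rule verification, is exactly what the paper's hint would unspool into. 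You are also right that the printed exponent $\partial_x^{j-k}$ is a typo for $\partial_x^{i-k}$; the two special cases the paper records immediately after the lemma, $[\partial_x^2,x-z]=2\partial_x$ and $[\partial_x^2,(x-z)^2]=2+4(x-z)\partial_x$, confirm this.
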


We therefore have:
\begin{align}
\begin{split}
 [L_0^{z},L_{2, x }^{z}] & = a^2 \left( \aonesq + a \atwo
 \right)(x-z) \partial_x^3, \\
 & + a^2 \left( b' + \aonesq/2 + a \atwo/2 \right) \partial_x^2,
\end{split}
\end{align}
and hence
\begin{equation}
 [L_0^{z},[L_0^{z},L_{2, x}^{z}]] = a^4 (\aonesq + a \atwo ),
 \partial_x^4,
\end{equation}
so that finally
\begin{equation}
  (L_1^{z})^{2} = (a a'(x-z))^2 \partial_x^4 +  2( a^2 \aonesq
  + a a' b )  (x-z) \partial_x^3 + \big( aa' b +
  b^2 \big) \partial_x^2.
\end{equation}

It follows that the approximation kernel of $\cG^{L^{s,z}}_1$ is given
by the applications of a differential operator with polynomial
coefficients to the Green's function of $e^{t\,L_0}$. If $\phi$ is a
smooth function, we denote by \  $C_\phi$ \ the convolution operator with
$\phi$, then \ $C_\phi f(x) := \phi*f(x) = \int \phi(x-y) f(y)dy$,
which shows that the distribution kernel of \ $C_{\phi}$ is $C_{\phi}(x,
y) = \phi(x-y)$. It is immediate to check that
\begin{equation}\label{eq.convolution}
  \partial_x C_\phi = C_{\partial_x \phi},
\end{equation}
while \ $C_\phi \partial_x = - C_{\partial_x \phi}$. By Remark
\ref{remark.1}, the distribution kernel of $e^{L_0^z}$ is given by
\begin{equation} \label{eq.e^L_0}
  e^{L_0^z} (x, y)= \frac{1}{\sqrt{2 \pi a^2 }} \exp( -
    \frac{|x-y|^2}{2 a^2} ), \quad a = a(0,z),
\end{equation}
and hence $e^{L_0^z}$ is a convolution operator.

Then, by \eqref{eq.convolution} \ $\partial_x^k e^{L_0^z}(x,y)=
H_k(\Theta)e^{L_0^z}(x,y)$, where \ $\Theta = \frac{x-y}{a^2}$ and
\ $H_k$ are the (rescaled Hermite) polynomials satisfying $H_0
= 1$ and $H_{k+1}(\Theta) = -\Theta H_{k}(\Theta) + H_{k}'
(\Theta)/a^2$. The polynomials $H_k$ are easily computed by induction
as:
\begin{equation}\label{eq.derivatives}
\begin{gathered}
  H_1(\Theta) = - \Theta, \quad
  H_2(\Theta) = \Theta^2 - \frac{1}{a^2}, \quad
  H_3(\Theta) = - \Theta^3 +\frac{3\Theta}{a^2},\\
  H_4(\Theta) = \Theta^4 - \frac{6\Theta^2}{a^2}
   + \frac{3}{a^4}, \quad
  H_5(\Theta) = - \Theta^5 + \frac{10}{a^2} \Theta^3 -
  \frac{15\Theta}{a^4} ,\\
  H_6(\Theta) = \Theta^6 -\frac{15}{a^2}
  \Theta^4 + \frac{45}{a^4} \Theta^2 - \frac{15}{a^6}.
\end{gathered}
\end{equation}
Using \eqref{eq.derivatives} we have
\begin{align}\label{eq.G0again}
  G_{0}(x,y;z) = e^{L_0^z} =  \frac{1}{\sqrt{2 \pi  a^2
    }} \exp( - \frac{|x-y|^2}{2 a^2} ),
\end{align}
and
\begin{equation} \label{eq.G1}
\begin{split}
  & G_{1}(x,y;z) = \Big( (L_1 + \frac{1}{2} [L_0,L_1]) e^{L_0}  \Big) (x,y) \\
  & = \left( b \partial_x + a a' (x-z) \partial_x^2 +
  \frac{1}{2}a^3 a ' \partial_x^3 \right) e^{L_0^z}(x,y)\\
  &  = \left( bH_1(\Theta) + a a' (x-z) H_2(\Theta) +
  \frac{1}{2}a^3 a ' H_3(\Theta) \right) e^{L_0^z} \\
  & = \frac{1}{\sqrt{2 \pi a^2 }} e^{ - \frac{|x-y|^2}{2 a^2} } \left[
    \left( \frac{3 a a ' - 2 b}{2 a^2} \right)(x-y) -
    \frac{a'}{2a^3}(x-y)^3 \right. \\ & \left. + (x-z) \left(
    \frac{(x-y)^2 - a^2}{ a^3} \right) \right].
\end{split}
\end{equation}

We now carry out a similar calculation for the next (and last) term of
our asymptotic expansion, namely
\begin{equation}
  G_{2}(x,y;z) =  (Q_2 e^{L_0^z})(x,y),
\end{equation}
with $Q_2$ given by Equation \eqref{eq.defQ}.  We finally have
\begin{align}\label{eq.G2}
\begin{split}
  & G_{2}(x,y;z) = \left ( \frac{1}{2} L_{2, \tau}^{z}+ L_{2,x}^{z} +
  \frac{1}{2} [L_0^{z}, L_{2,x}^{z}] + \frac{1}{6} [L_0^{z}, [L_0^{z},
      L_{2,x}^{z}]] \right. \\
  & \left.  + \frac{1}{2}L_1^{z,2} + \frac{1}{3} L_1^{z} [L_0^{z},
    L_1^{z}] + \frac{1}{6} [L_0^{z}, L_1^{z}] L_1^{z}+\frac{1}{8}
       [L_0^{z}, L_1^{z}]^2 \right ) e^{L_0^z}\\ &=\left
       (P_0+\sum_{i=1}^6 P_iH_i (\Theta)\right )e^{L_0^z} (x,y)\, .
\end{split}
\end{align}
where $P_j$ are polynomials in $x-z$ and $x-y$ with coefficients given
in terms of the values of the functions $a$, $b$, and $c$, and their
derivatives, all evaluated at $z = z(x,y)$, as follows
\begin{equation}
\begin{split}
&P_0=c, \ \ P_1=b'(x-z),\\
& P_2=\frac{1}{2} \Big [ \frac{1}{2} a^3 \atwo + a^2 b' + a^2
    \aonesq/2 + b^2 +   \aonesq  (x-z)^2 \\
& \qquad \qquad  + a \left ( ba' + \dot{a} + \atwo(x-z)^2 \right ) \Big ],\\
&P_3=a(x-z)( a'b + \frac{1}{2} a^2 \atwo + \frac{3}{2}
  a \aonesq),\\
&P_4=\frac{a^2}{3} \Big[ \frac{1}{2} a^3 \atwo + 2 a^2 \aonesq + a
         \frac{3}{2} a' b + \frac{3}{2} \aonesq(x-z)^2 \Big ],\\
&P_5=\frac{1}{2} a^4\aonesq(x-z), \ \ P_6=\frac{1}{8} a^6 \aonesq.
\end{split}
\end{equation}

In particular, we obtain the following explicit formula.

\begin{example}\label{ex.CEV1}
For the CEV model given by Equation \eqref{eq.LCEV}, we have $a =
\sigma z^{\alpha}$, $a' = \alpha \sigma z^{\alpha-1}$, $b =
rz^{\alpha}$, $z = z(x,y)$, and hence,
\begin{multline*}
  G^{CEV}_{1}(x,y;z)  = \frac{1}{\s z^\a \sqrt{2 \pi}}
\left[ \left( \frac{3 \alpha
    \sigma^2 z^{2(\alpha - 1)} - 2 r}{2 \sigma} \right) \left(
  \frac{x-y}{\sigma z^\a} \right) \right.
  \\  - \frac{\alpha
    \sigma z^{\alpha - 1}}{2} \left( \frac{x-y}{\sigma z^\alpha}
  \right)^3 +  \left.   \left( \frac{x-z}{\sigma z^\alpha} \right)
  \left( \left( \frac{x-y}{\sigma z^\a} \right)^2 - 1 \right)
  \right]
e^{ - \frac{|x-y|^2}{2 \s^2 z^{2\a}} }.
\end{multline*}
\end{example}

Let us introduce the {\em time dependent Black-Scholes-Merton model}
to correspond to the operator
\begin{align}\label{eq.tLBSM}
  L := \frac{1}{2} \sigma(t)^2 x^2 \D_x^2 + r(t) x\D_x - r(t),
\end{align}
Thus the difference between the usual Black-Scholes-Merton model
\eqref{eq.LBSM} and the time dependent Black-Scholes-Merton model
\eqref{eq.tLBSM} is that in the latter we allow $\sigma$ and $r$ to
depend on time.  Then the asymptotic formula for the time dependent
Black-Scholes-Merton model is obtained by setting $\alpha = 1$ in the
Example \ref{ex.CEV1}, since that formula does not contain time
derivatives of the coefficients.

At this stage, we can allow the basepoint $z$ to vary with $x$ and
$y$. In Section \ref{sec.basepoints} below we compute the expansion
for the basepoint $z(x,y)=x$ and compare it in Section
\ref{sec.error}.  Different choices of basepoints $z$ may lead to more
accurate and stable approximations. In future work, we plan to study
how to optimize the choice of $z$.

\begin{definition} \label{def.admissible}
 We call a function $z=z(x,y)$ {\em admissible}  if  $z(x,x)=x$ and all
derivatives of $z$ are bounded.
\end{definition}

In \cite{CCMN,CMN}, we rigorously prove error bounds for the remainder
term in \eqref{eq.U.expansion} in Sobolev spaces under the assumption
that $z$ be admissible (and all the coefficients of $L$, together with
their derivatives, be bounded functions, and $L$ be uniformly
  strongly elliptic).  The function $z=z(x,y)$ can be thus quite
general.

\subsection{Kernel expansions at $z=x$}\label{sec.basepoints}
The choice $z = x$ yields a simplified expression for the
approximation, since certain terms disappear, and the approximation
yields the price of a European call option in closed form. In fact,
the convolution with the approximate Green's function can be evaluated
exactly and the price of a European call option given in closed form.
In particular there is no need for numerical quadrature in evaluating
the integrals, thus improving the speed of our calculations.

\begin{example}
By setting $z = x$ in\eqref{eq.G1} and evaluating all
  coefficient functions at $(0,x)$, we obtain the first-order
correction to the rescaled kernel $\cG^{L^{s,z}}_1$ in the form:
\begin{equation}\label{first_order_approx}
  G_{1}(x,y; z=x) =\frac{x-y}{\sqrt{2\pi a^2}}
  e^{-\frac{(x-y)^2}{2a^2}} \left(\frac{3a a'- 2b}{2a^2} -
  \frac{a^\prime}{2a^3}(x-y)^2\right ).
\end{equation}
\end{example}

\begin{example}
Similarly, the second-order correction to the rescaled kernel
$\cG^{L^{s,z}}_1$ is obtained in the form:
\begin{align}\label{second order approx}
\begin{split}
  & G_{2}(x,y;z=x)= \frac{1}{\sqrt{2\pi
      a^2}}e^{-\frac{(x-y)^2}{2a^2}}\cdot \left \{ \frac{1}{8} a^6
  a'^2 H_6(\Theta) \right.\\
  & +\frac{a^{3}}{6}\left ( a^2 \atwo +4aa'^2+3ba' \right )
  H_4(\Theta)\\
  & \left . +  \frac{1}{4}\left (a^3\atwo +2a^2b'+2a\dot{a}+2aa'b +
  a'^2a^2+2b^2 \right ) H_2(\Theta)  +c \right \},
\end{split}
\end{align}
where $\Theta = \frac{x-y}{a(0,x)^2}$, and $H_6, H_4, H_2$ are given
by \eqref{eq.derivatives}.
\end{example}

\begin{example}
For the time-dependent Black-Scholes-Merton equation, we have $b(t,x)
= r\,x$, $c(t,x)=-r$ and $a(t,x) = \sigma(t) \, x$ so that
\begin{align}\label{eq.Green.function.BSM.standard}
G^{BSM}_{1}(x,y;z=x)= \frac{x-y}{\sqrt{2\pi} \sigma^2 x^2} e^{-
  \frac{1}{2}\left( \frac{x-y}{\sigma x} \right)^2} \left[
  \frac{3 \s^2 - 2r}{2 \s} - \frac{\s}{2} \left( \frac{x-y}{\s x}
  \right)^2 \right],
\end{align}
where all coefficient functions are calculated at $(0,x)$.
\end{example}

\begin{example}
The second-order correction to the rescaled kernel is given by
\begin{align}\label{eq.Green.function.BSM.standard.2}
\begin{split}
& G^{BSM}_{2}(x,y;z=x) =\frac{1}{\sqrt{2\pi} \s x} e^{-
    \frac{1}{2}X^2} \cdot \left [
    -\frac{\sigma^4+4\sigma\dot{\sigma}(0)+4r^2+4r\sigma^2
    }{8\sigma^2} \right. \\
& \left. +\frac{ 4
      \dot{\s}(0)\s+4r^2-16r\sigma^2+15\sigma^4
    }{8\sigma^2}X^2+\frac{12r-29\sigma^2}{24}X^4
    +\frac{\sigma^2}{8}X^6\right ],
\end{split}
\end{align}
where $X = (x-y)/(\sigma x).$
\end{example}

\begin{example}
For the time-dependent CEV model, $c(t,x) = -r$ and $a(t,x)
= \sigma(t)\, x^\a$, $b(t,x)=r\,x$ with $\sigma(0) = \s$ so that

\begin{align}
\begin{split}
G^{CEV}_{1}(x,y;z=x) = &\frac{x-y}{\sqrt{2 \pi} \sigma^2 x^{2\alpha}}
e^{-\frac{1}{2} \left( \frac{x-y}{\s x^\a} \right)^2} \left[ \frac{3
    \a \s^2x^{\a-1} - 2 rx^{1-\a} }{2 \s} \right. \\ &\left. -
  \frac{\a \s x^{\a -1}}{2}\left( \frac{x-y}{\s x^\a} \right)^2
  \right],
\end{split}
\end{align}
\end{example}
and
\begin{example} For the second order correction we have
\begin{equation}
G^{CEV}_2(x,y;z=x)=\left ( P_2H_2+P_4H_4+P_6H_6  -r\right
)\frac{1}{\sqrt{2\pi} \sigma x^\a}\cdot
  e^{-\frac{(x-y)^2}{2\sigma^2\, x^{2\a}}}
\end{equation}
where $$H_2=\left( \frac{x-y}{\s^2 x^{2\alpha}}\right
)^2-\frac{1}{\s^2 x^{2\alpha}}, H_4=\left( \frac{x-y}{\s^2
  x^{2\alpha}}\right
)^4-\frac{6(x-y)^2}{\s^6x^{6\alpha}}+\frac{3}{\s^4x^{4\alpha}}$$
$$ H_6=\left( \frac{x-y}{\s^2 x^{2\alpha}}\right
)^6-\frac{15}{\s^2x^{2\alpha}}\left( \frac{x-y}{\s^2
  x^{2\alpha}}\right )^4+\frac{45}{\s^4x^{4\alpha}}\left(
\frac{x-y}{\s^2 x^{2\alpha}}\right )^2-\frac{15}{\s^6x^{6\alpha}}$$
and
$$
P_2 = \frac{1}{4}\left ( \s^4\alpha (2\alpha-1)x^{4\alpha-2}+2\s (\s r + \dot{\s}(0)+\s \alpha r)x^{2\alpha} +2r^2x^2 \right )
$$
$$ P_4= \frac{1}{6}\s^4 \alpha
x^{4\alpha}\left(\s^2(\alpha-1)x^{2\alpha-2}+4\s^2 \alpha
x^{2\alpha-2}+3r\right), P_6=\frac{1}{8}\s^8 \alpha^2 x^{8\alpha-2}
$$
\end{example}

Note that in the above two Examples for the CEV model, setting
$\alpha=1$ leads to the corresponding approximation for the BSM
model.

\section{Closed Form Approximate Solutions} \label{sec.closedform}

In this section, we consider European call options. For European put
options similar results can also be obtained, either directly from the
definition or by using put-call parity \cite{Shreve}. In what follows,
we will work with the expansion obtained by setting $z(x,y)=x$ as the
basepoint. In this case, we are able to compute the integrals defining
the approximate option price $U^{[k]}$ from $G^{[k]}$ in closed form,
which bypasses the need for more computationally intensive integration
methods such as numerical quadrature, which are needed for more general basepoints $z$.

By \eqref{first_order_approx} and \eqref{eq.correspondence}, the
first-order approximate Green's function in given by
\begin{equation}
\label{eq.first.order}
   \cG_t^{[1]}(x,y) = \frac{1}{\sqrt{2\pi t}a}e^{-\frac{(x-y)^2}{2a^2
       t}}\left(1 + \frac{3aa'- 2b }{2a^2}(x-y) - \frac{a'}{2a^3
     t}(x-y)^3 \right ).
\end{equation}

Similarly, by \eqref{second order approx} and
\eqref{eq.correspondence} the second-order approximate Green's
function is given by

We recall that we implicitly assume all coefficients are evaluated at
$(0,x)$.

\begin{equation}
\cG_t^{[2]}(x,y)= \frac{\sqrt{t}}{\sqrt{2\pi}
    a}e^{-\frac{(x-y)^2}{2a^2t}} \cdot \left ( P_6 H_6(\Xi)+P_4H_4(\Xi)+P_2 H_2(\Xi)-r \right )
\end{equation}
where $\Xi=\frac{x-y}{a^2\sqrt{t}}$, the functions $H_6,H_4,H_2$ are
given by \eqref{eq.derivatives}, and
\begin{equation}
\begin{split}
P_6=\frac{1}{8}a^6 a'^2, P_4=\frac{a^{3}}{6}\left ( a^2 \atwo
+4aa'^2+3ba' \right )\\
P_2=\frac{1}{4}\left (a^3\atwo +2a^2b'+2a\dot{a}+2aa'b +
  a'^2a^2+2b^2 \right )
\end{split}
\end{equation}
All the coefficient functions are evaluated at $t=0, z=x$.

For European Call options with strike
price $K$, by \eqref{eq.convolution.U}  the $n^{th}$-
order approximated option price is
\begin{equation} \label{option price}
U^{[n]}(t,x)=\int_0^\infty \cG_t^{[n]}(x,y)(y-K)^+dy,
\end{equation}
where we only take $n=1,2$ here.  We recall that $t=T-\mathfrak{t}$ is
the time to expiry $T$ and $\mathfrak{t}$ is real time. So to be more
precise, {\em the $n^{th}$-order option pricing formula for European
  call options with expiry time $T$ is \ $
  U^{[n]}(T-\mathfrak{t},x)$}.

We have already observed that the general form of the approximate
kernel, when $z=x$, is a product of polynomial functions against a
rescaled Gaussian.  Therefore, the integration in \eqref{option price}
above can be carried out in terms of error functions.  Explicitly,
\begin{eqnarray*}
U^{[1]}(t,x)&=&\frac{\sqrt{t}}{2\sqrt{2\pi
}}e^{-\frac{(x-K)^2}{2a^2 t}}\left (
2a-a'(x-K)
\right )\\
&&+\frac{1}{2}\cdot \left ( {\rm erf} \left(\frac{x-K}{\sqrt{2
t}a}\right)+1\right )\left(b\, t+x-K\right),
\end{eqnarray*}
and
\begin{equation}\label{2nd closed form}
\begin{split}
&U^{[2]}(t,x)=U^{[1]}(t,x)+\frac{rt}{2} \left ( erf\left(
\frac{x-K}{a\sqrt{2 t}} \right)+1 \right )(K-x)\\
&+\frac{1}{2\sqrt{2\pi  t}}e^{-\frac{(x-K)^2}{2a^2 t}}\cdot
\left ( \frac{t^2}{6} a^2\atwo
-a(r+a'^2/12)\, t^2
\right.\\
&+( t\, \dot{a}+\atwo  (x-K)^2/3) \, t+\frac{t}{a}
(t\, b^2-a'^2(x-K)^2/6)\\
&\left.+\frac{t\, b a'
(x-K)^2}{a^2}+\frac{a'^2(x-K)^4}{4a^3} \right ),
\end{split}
\end{equation}
Note that in financial applications (\ie in the risk free measure)
$b(t,x)=rx$ and $c(t,x)=-r$.

\begin{example} For the CEV model we have
\begin{align}\label{eq.CEV.approx}
\begin{split}
U_{CEV}^{[1]}(t,x)&=\frac{\sigma x^{\a - 1} \sqrt{t}}{2\sqrt{2\pi
}}e^{-\frac{(x-K)^2}{2  \sigma^{2}t\, x^{2\a}}}\left ( (2 -
\a)x + \alpha K \right )\\ &+\frac{1}{2}\cdot \left (
  {\rm erf} \left(\frac{x-K}{\sqrt{2 t}\sigma x^\a}\right)+1\right
  )\left( (1 + r t)\,x - K \right).
\end{split}
\end{align}
and when $\alpha=1$, it reduces to the first order approximation for
the Black-Scholes-Merton model.
\end{example}

\section{Comparison and Performance of the Method}\label{sec.error}

In this section, we discuss the accuracy and efficiency of our
approximation for the Black-Scholes-Merton and the CEV model.  We
employ the Black-Scholes-Merton model primarily as a didactic
example, given that an exact kernel and option pricing formulas
exists.  For the CEV model, we compare our approximation to other
solution formulas considered a benchmark in the literature, in
particular the Hagan-Woodward scheme \cite{HW}.

What we find in general is a very good agreement of the approximate
pricing formulas we derive in this paper with those available in the
literature, but with significant advantage in the computational
efficiency.  In particular, the agreement is good even for  times
that are not small.  In Section \ref{bootstrap}, we propose a bootstrap
scheme in time to improve the accuracy of our approximation for large
time.

\subsection{Performance of the method}

We start by discussing the Black-Scholes-Merton model, and choose the
parameters $K=15$, $\sigma=0.3$ and $r=0.1$, and plot the exact and
approximate solutions for $0<x<25$. We compare our formula with the
Black Scholes exact solution formula for different times $t$. Figure
\ref{figure BS} gives two different cases, which show that when $t$ is
small the two solutions are in very good agreement with an absolute
error of order $O(10^{-3})$. We notice that even when $t$ is not
small, the error is small. Tables \ref{table 1} and \ref{table 2} give
a analysis of the {\em pointwise} error for the first order
approximation with respect to the exact Black-Scholes formula.

\begin{remark}
Throughtout this section, we {\em fix the basepoint $z=x$}, so that we
have closed-form approximate solution formulas, and we can better
gauge the error introduced by the our method. For more general
basepoints $z$, further error is introduced by the numerical
quadrature used for the integration and the truncation of the pay-off
function $h$ at large $x$ (this error is lower order, however, if $h$
is truncated at $x$ large enough with respect to $K$).
\end{remark}

\begin{figure}
\begin{center}
\includegraphics[width=2.2in]{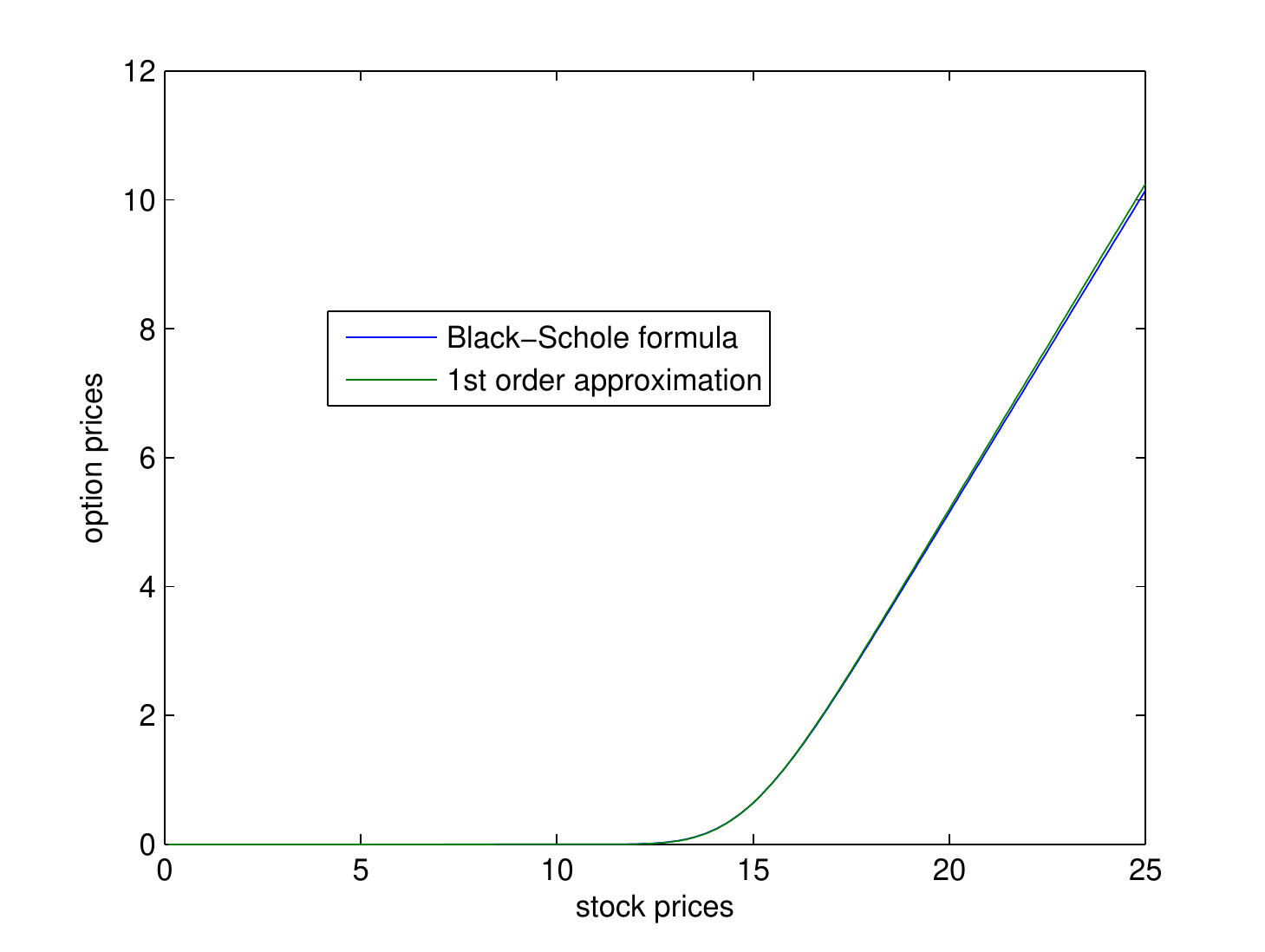}
\includegraphics[width=2.2in]{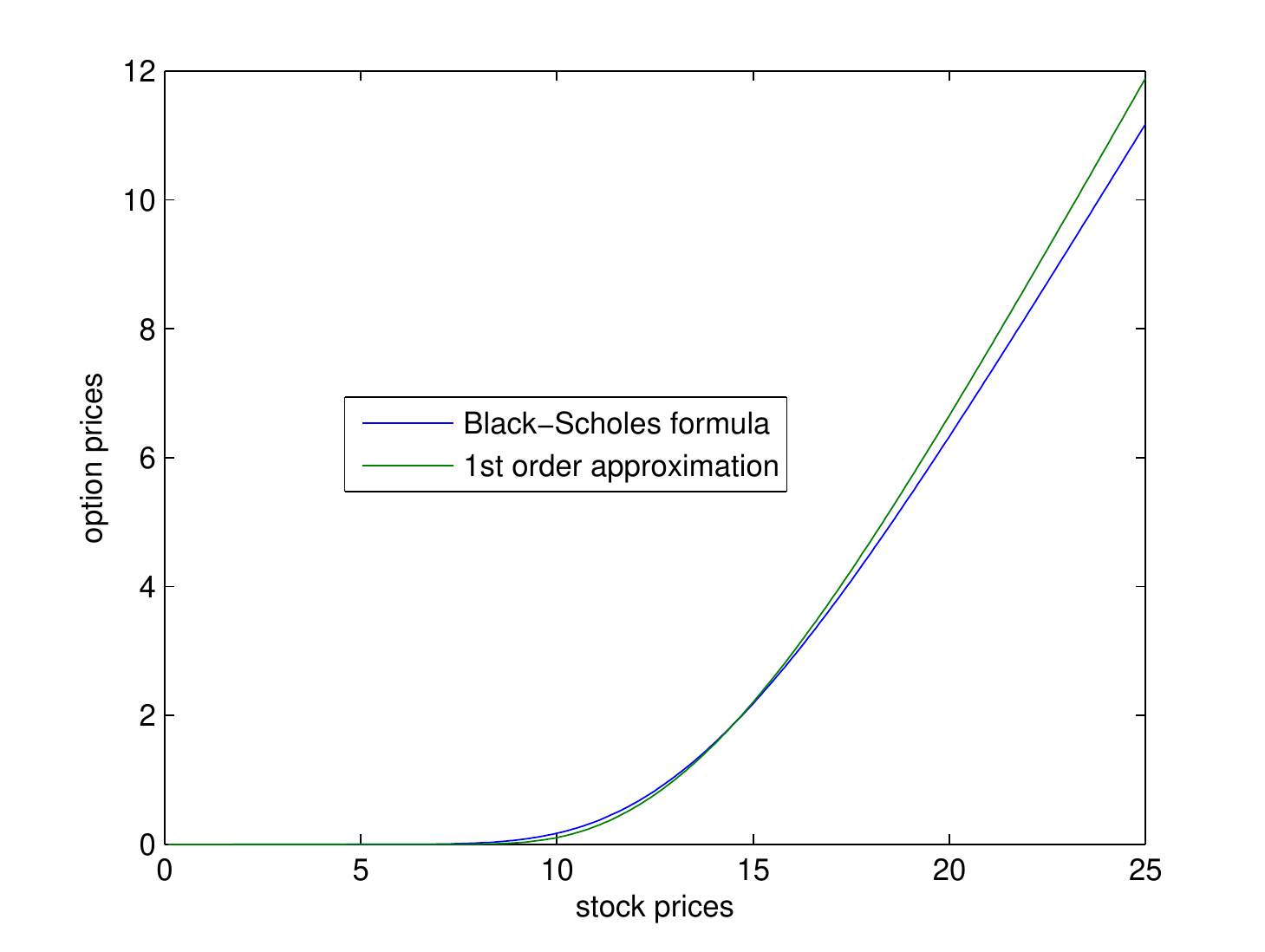}
\end{center}\caption{Comparison of our first order approximation with the Black-Scholes formula.
Parameters: $\,K=15,\,\sigma=0.3,\,r=0.1$. Basepoint $z=x$. The left
picture is for $t=0.1$, and the right one is for $t=0.8$. Note that
the x-axis is scaled by 10, i.e. the label 150 means the stock price
is 15. } \label{figure BS}
\end{figure}

\begin{table}
\begin{tabular}{|l|l|l|l|l|l|l|l|}\hline
 \backslashbox{$t$}{$x$}& 12&13 & 14 & 15 & 16 & 17 & 18 \\\hline
 0.01 &0.0000&0.0000&0.0313&0.3266&0.0387&0.0019&0.0000\\\hline
 0.05 &0.0461&0.3385&0.0179&0.3915&0.0179&0.4068&0.3957\\\hline
 0.1 &0.7&0.7&0.2&0.5&0.2&0.4&1.2\\\hline
 0.2 &2.2&0.3&0.7&0.9&0.7&0.3&1.3\\\hline
 0.5 &1.2&2.1&2.5&2.7&2.7&2.9&1.9\\\hline
\end{tabular}
\medskip
\caption{Error of the first order
approximation for the BSM model, $K=15,\sigma=0.3,r=0$, error
scale= $10^{-3}$.}\label{table 1}
\end{table}

\begin{table}
\begin{tabular}{|l|l|l|l|l|l|l|l|}\hline
 \backslashbox{$t$}{$x$}& 12&13 & 14 & 15 & 16 & 17 & 18 \\\hline
 0.01 &0.0000&0.0000&0.1000&0.0000&0.9000&2.0000&3.0000\\\hline
 0.05 &0.1&0.9&1.4&0.1&3.6&8.7&14.5\\\hline
 0.1 &1.7&3.8&3.3&0.3&7.0&15.9&26.4\\\hline
 0.2 &9.3&10.7&7.1&1.2&14.0&30.2&48.8\\\hline
 0.5 &39.0&31.4&15.1&8.4&39.4&76.0&116.8\\\hline
\end{tabular}
\medskip
\caption{Error of the first order approximation for the BSM model,
  $K=15,\sigma=0.3,r=0.1$, error scale=$10^{-3}$.} \label{table 2}
\end{table}


\begin{remark}
Formula \eqref{eq.BS.approx} shows that the first-order approximation
of the kernel depends linearly on $r\,t$. Therefore, the error grows
more rapidly for $r$ large at comparable times. The same observation
holds for the CEV model. For Black-Scholes, this issue does not arise,
since a change of variables allows to reduce to the case $r=0$ in the
equation.
\end{remark}

Analytic pricing formulas for the CEV model in terms of Bessel
function series have been derived for any value of $\beta$
\cite{cox,EM}. However, sum such series to accurate order can be very
computationally intensive (but see Schroder \cite{Schroder} for
methods to compute the pricing formulas more efficiently).

The numerical tests show our approach yields accurate pricing formulas
that are, however, computationally much simpler. We choose
$\beta=\frac{2}{3},K=15,\sigma=0.3,r=0.1$ for parameters.  Schroder
\cite{Schroder} derived the exact CEV solution when
$\beta=\frac{2}{3}$. Figure \ref{figure CEV} gives the comparison of
our method and the true solution of the CEV model for this value of
$\beta$ for different times.  Again, we plot the two solutions for
$0<x<25$.

\begin{figure}
\begin{center}
\includegraphics[width=2.2in]{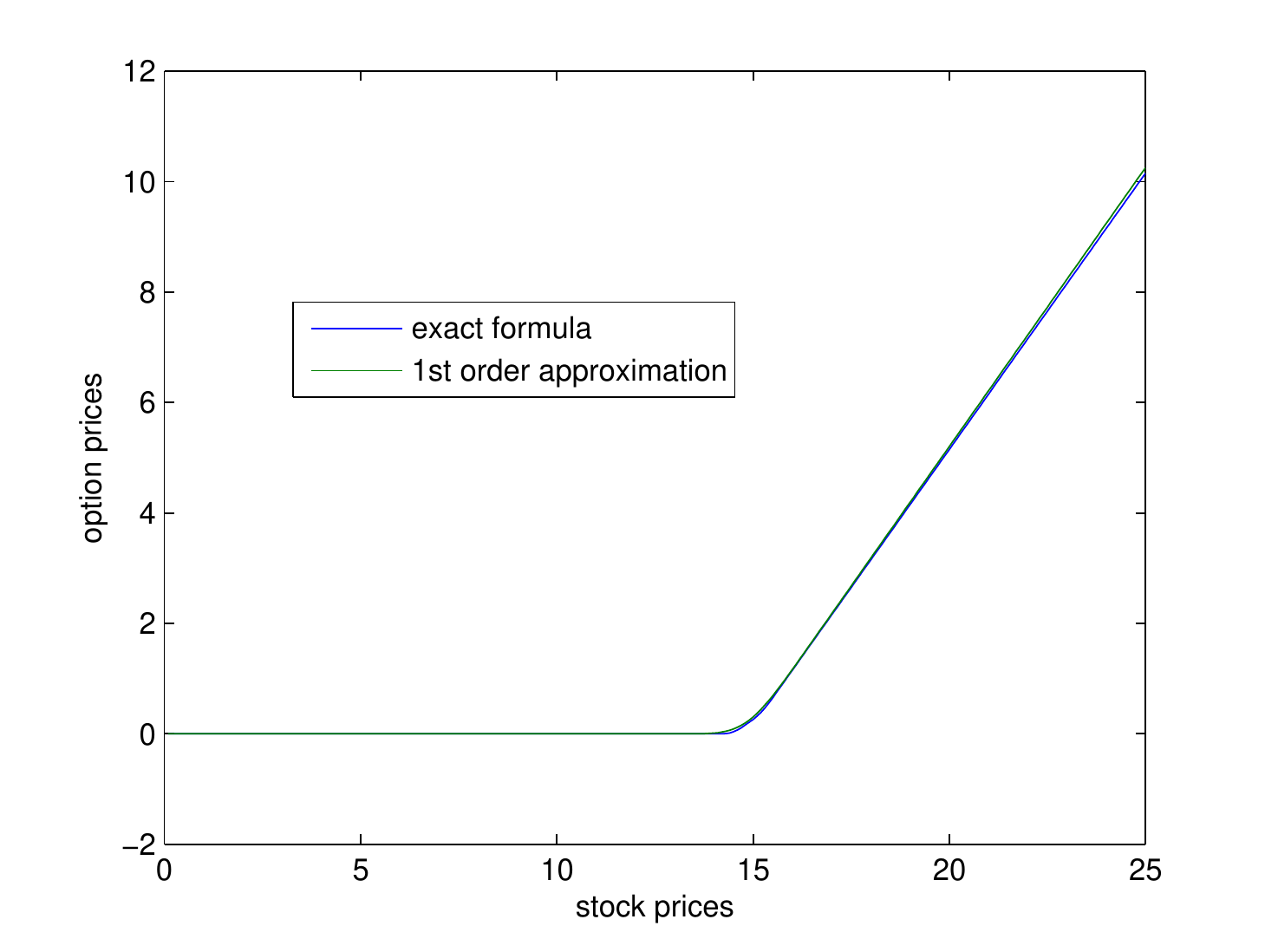}
\includegraphics[width=2.2in]{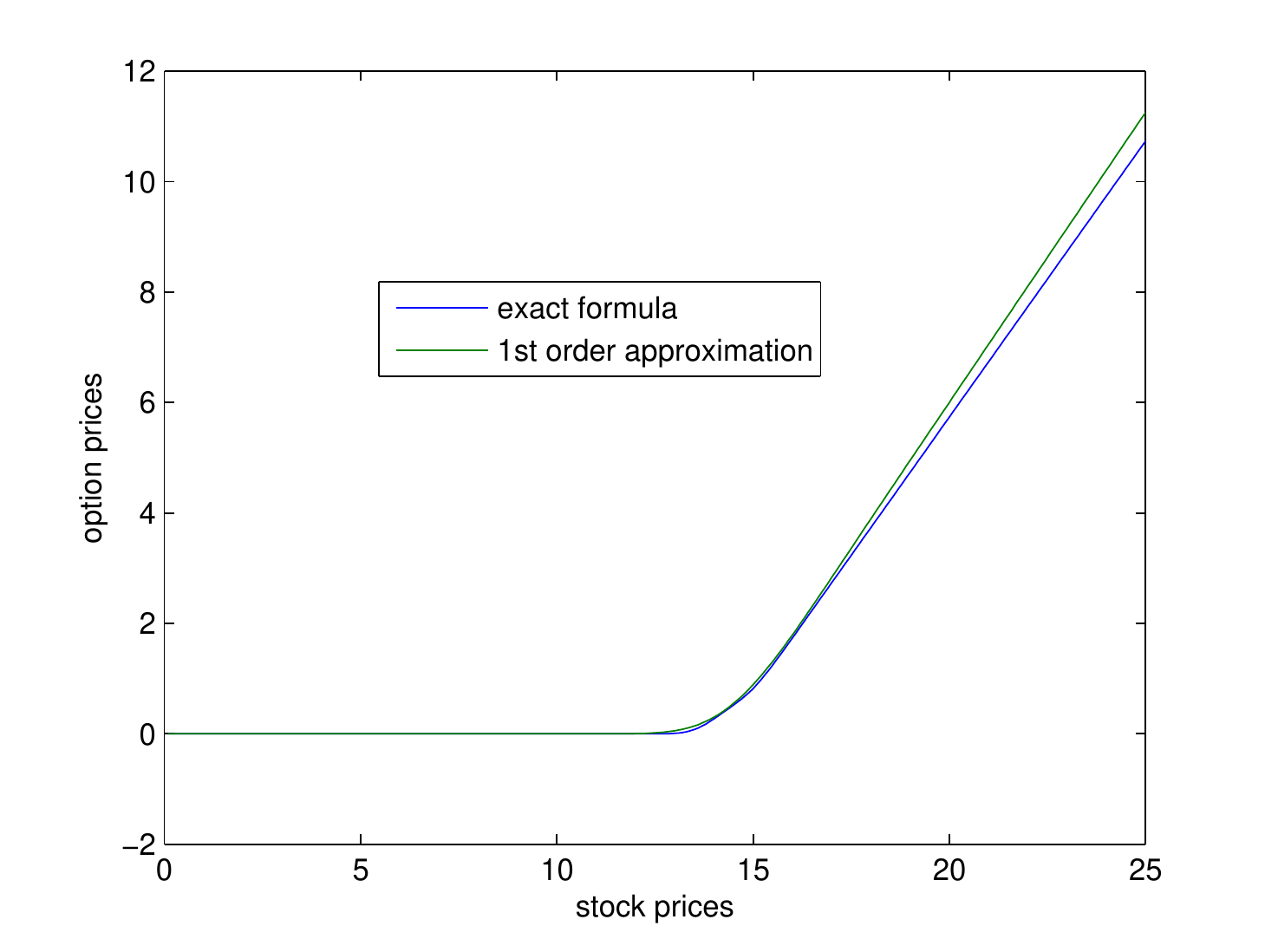}
\end{center}\caption{Comparison of our first order approximation with the exact formula for the CEV model derived in \cite{Schroder}.
Parameters:$\beta=\frac{2}{3},K=15,\sigma=0.3,r=0.1$. Basepoint
$z=x$. The first graph is plotted when $t=0.1$, and the second is
when $t=0.5$. In the graphs the x-axis is scaled by 10, i.e the
lable 150 means the stock price is 15.} \label{figure CEV}
\end{figure}

Hagan and Woodward in \cite{HW} studied more general local volatility
models, for which the stock price under the forward measure follows
the SDE
\begin{equation*}
    dF_t=\gamma(t)A(F_t)dW_t,
\end{equation*}
for some deterministic and suitably smooth functions $\gamma$ and
$A$.  CEV fits into this general model.

Using a singular perturbation technique, Hagan and Woodward obtain a
very accurate formula for the implied volatility
for this model.  In the CEV case, their implied volatility reads
\begin{equation*}
    \sigma_B=\frac{a}{f^{1-\beta}}\left ( 1+\frac{1}{24}(1-\beta)(2+\beta)
    \left(\frac{e^{rT}S_0-K}{f} \right )^2
    +\frac{1}{24}\frac{(1-\beta)^2a^2T}{f^{2(1-\beta)}} \right ),
\end{equation*}
where
\begin{equation*}
  a=\sigma
  \sqrt{\frac{e^{2r(1-\beta)T}-1}{2r(1-\beta)T}},f=\frac{e^{rT}S_0+K}{2}.
\end{equation*}
The approximate pricing formula is then obtained from the
Black-Scholes formula by using $\sigma_B$ as volatility.

When $\beta=\frac{2}{3}$, the CEV formula can be computed exactly
\cite{Schroder}.  In this case, Hagan and Woodward's approximation is
shown by Corielli {\em et al} to be very accurate \cite{CFP}.  We
therefore take this approximation as benchmark for comparison with our
method. In the following numerical comparison, we choose $\beta=2/3$,
$K=20$, $r=0.1$, $\sigma=0.3$ and different times $\tau=0.3, \; 0.5$
We compute the prices on the interval $[0,30]$, and divide it into 300
subintervals. Since the prices near the strike is of most interest for
practitioners, we compare the methods near $K=20$.  Figure \ref{figure
  Hagan} gives the results, from which we see that our approximation
is more accurate than the Hagan-Woodward approximation near the strike
for different times.

We remark that our method can in principle yield arbitrary accuracy in
the small-time limit if more terms in the kernel expansion
\eqref{eq.G.expansion} are included. Furthermore, it allows to derive
approximate solution formulas for even more general models than those
of Hagan and Woodwards.

\begin{figure}
\begin{center}
\includegraphics[width=2.2in]{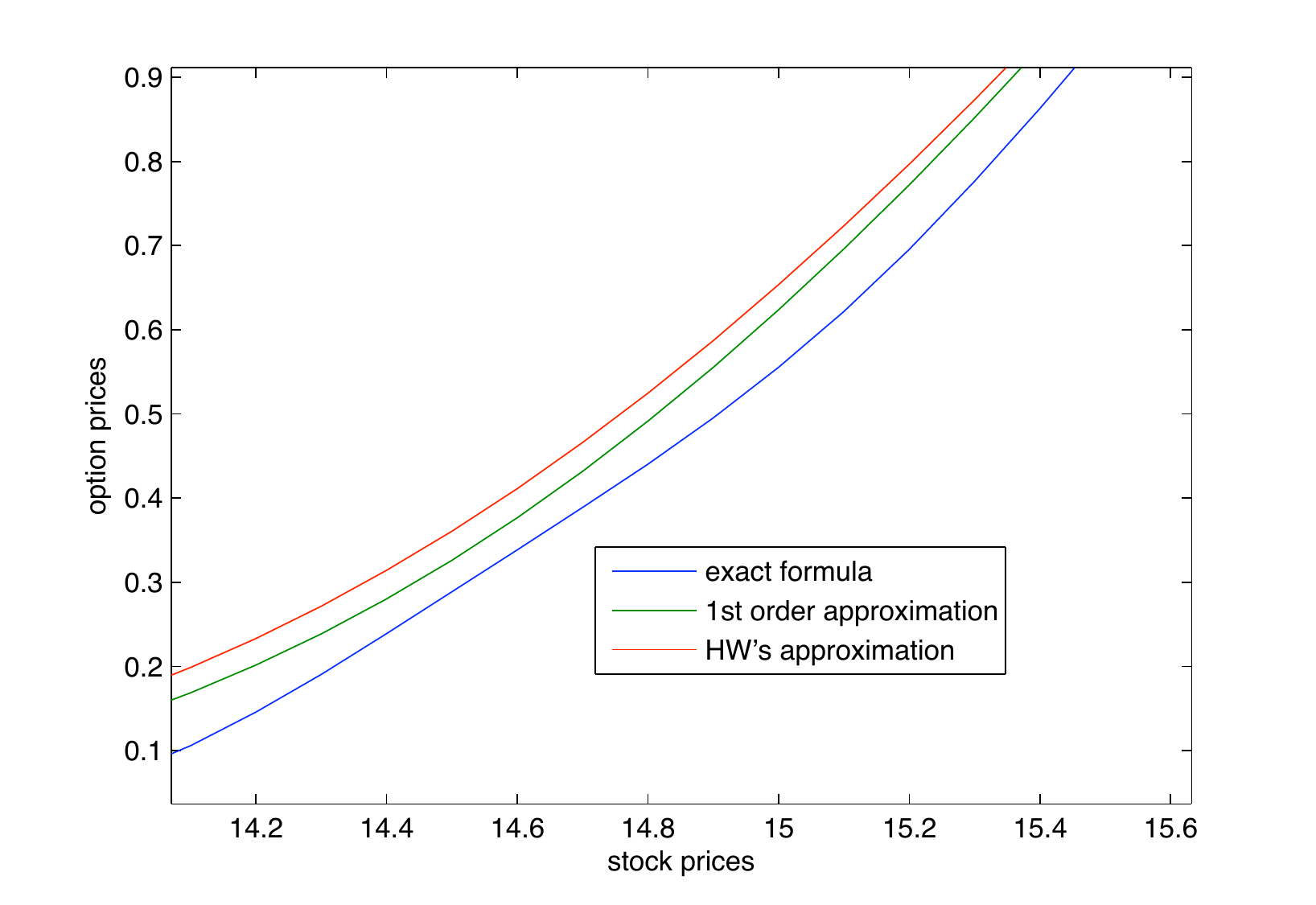}
\includegraphics[width=2.2in]{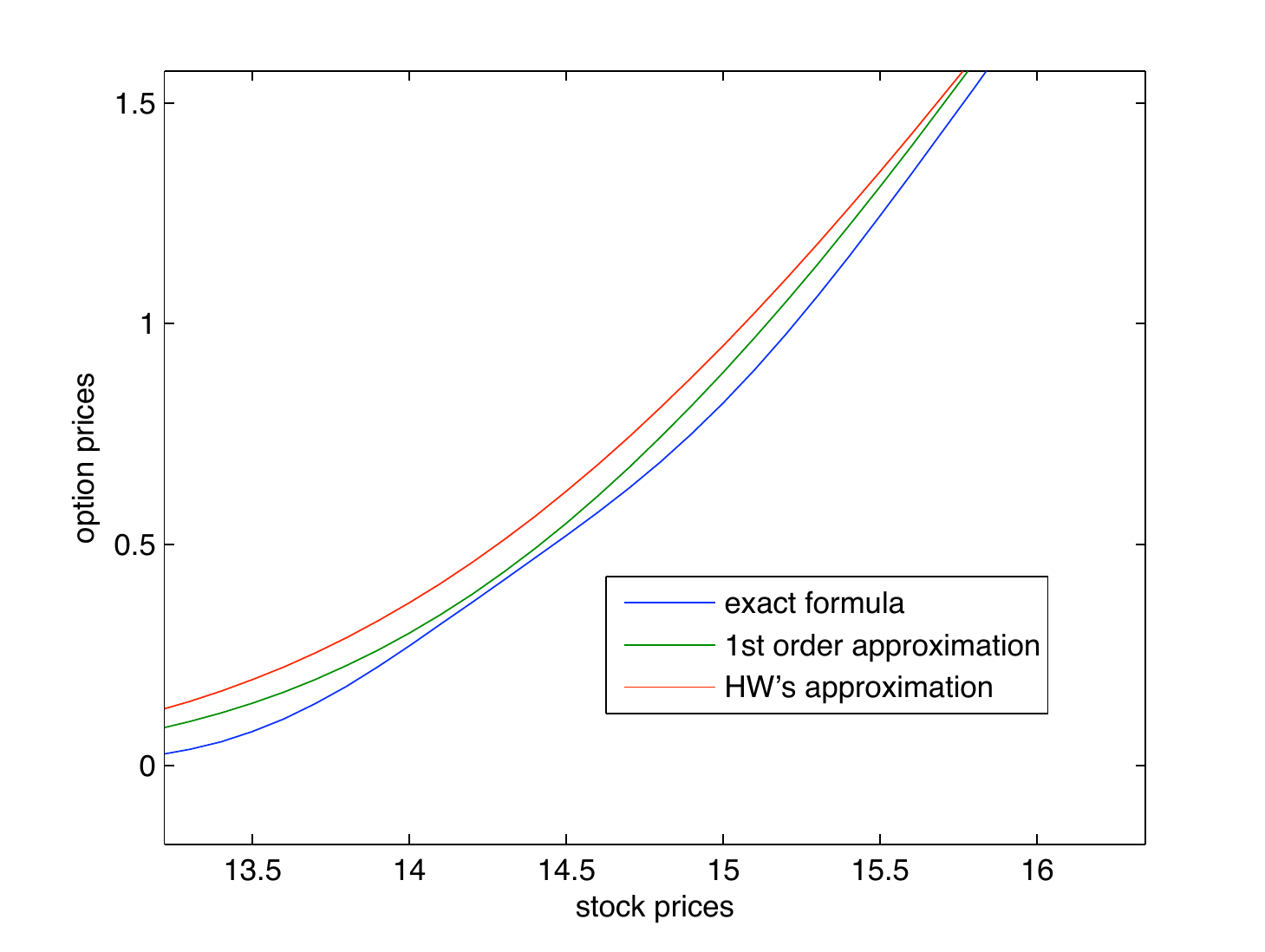}
\end{center}\caption{Comparison of our approximation with Hagan's results for the CEV model near strike.
Parameters: $\beta=2/3$, $K=20$, $r=0.1$, $\sigma=0.3$. Basepoint:
$z=x$. The first graph is plotted when $t=0.3$, and the second is
when $t=0.5$. Note that in the graphs the x-axis is scaled by 10,
i.e. the label 200 denotes that the stock price is 20.}\label{figure
Hagan}
\end{figure}


\subsection{The Greeks}\label{greeks}

In this part, we use the second-order approximate solution to compute
the Greeks of a European call option. The Delta and Gamma of
  a call option, collectively known as the {\em Greeks} of the option,
  at the point $x$ are calculated as
$$
\text{delta}=\frac{u(\t,x+dx)-u(\t,x-dx)}{2dx},
$$
and
$$
\text{gamma}=\frac{u(\t,x+dx)+u(\t,x-dx)-2u(\t,x)}{dx^2}
$$ respectively, where $u(t,x)$ is the option price.  Some methods,
for example the Monte Carlo method, can price options accurately, but
they are not efficient for obtaining good hedging parameters.  We
shall show that our approximations not only give option prices, but
also Greeks accurately.  Again for didactic purposes, we choose the
Black-Scholes-Merton model for which the Greeks can be computed
exactly.

Since we can price options in closed form (by choosing $z=x$), we can
calculate the Greeks in closed form by simply differentiating the
approximate pricing formula. However, again because of the complexity
of these formulas, we will obtain the hedging parameters numerically.

In the numerical experiment, we choose the parameters as follows:
maturity $\t=0.5$, volatility $\sigma=0.5$, strike $K=20$, interest
rate $r=10\%$ In Figure \ref{figure delta}, we plot the difference
between our approximation and the exact solution for Delta when the
stock price varies from 0 to 40.  Figure \ref{figure gamma} does the
same for Gamma. The numerical test shows that the pointwise difference
is very small, of the order of $10^{-3}$ in both cases. More
specifically, the biggest error is around $13\times 10^{-3}$.

\begin{figure}
\begin{center}
\includegraphics[width=2.2in]{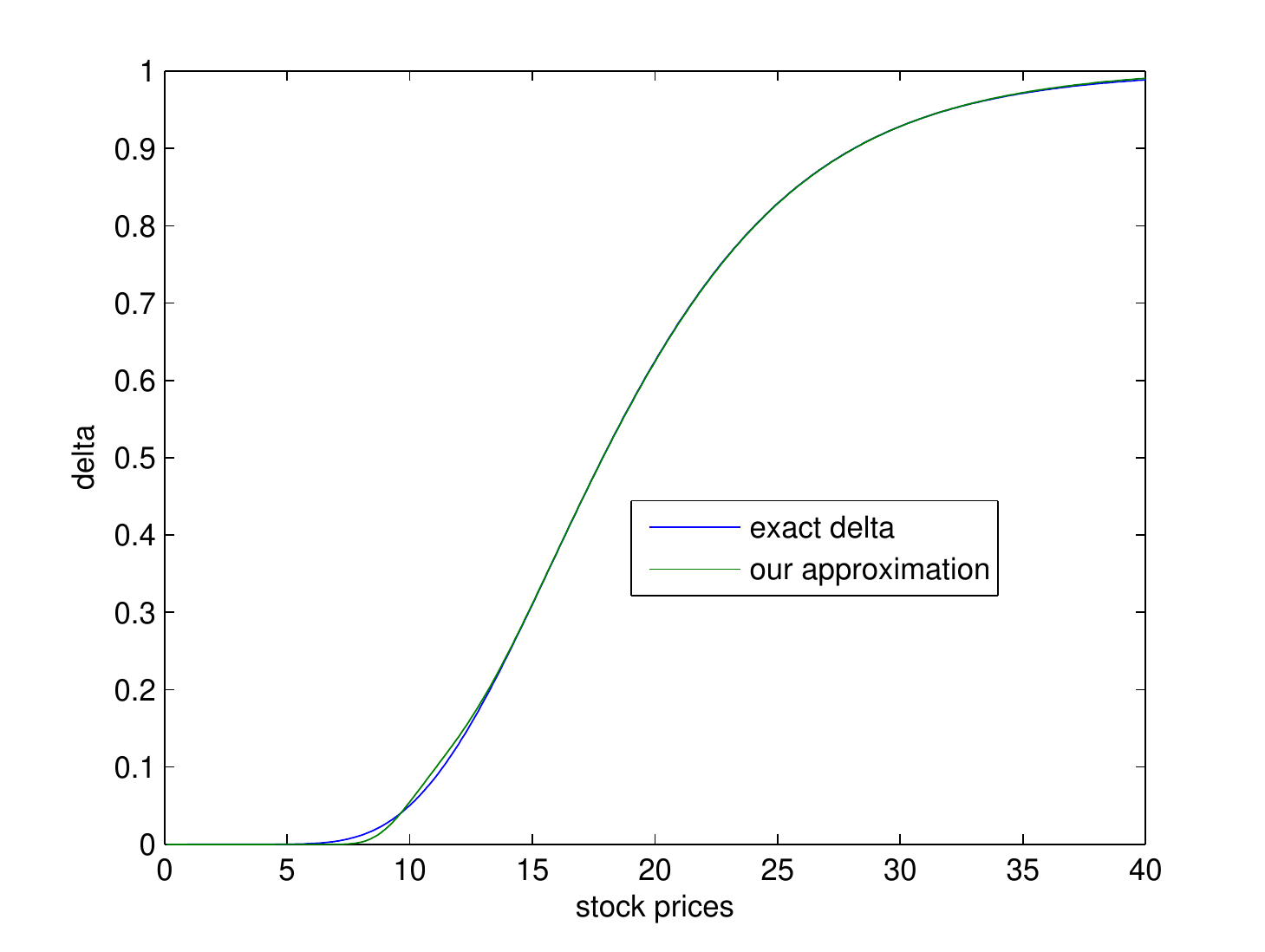}
\includegraphics[width=2.2in]{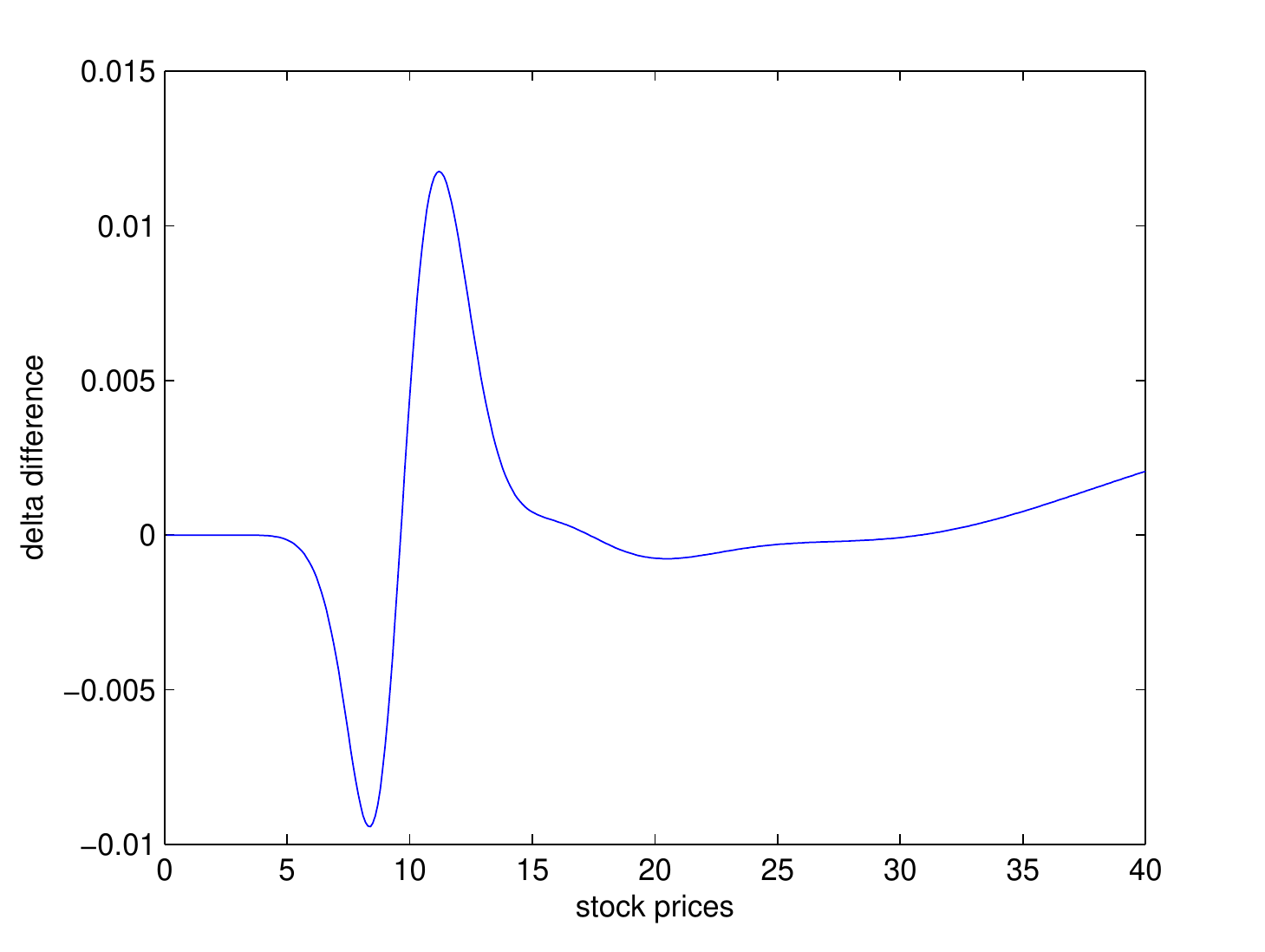}
\end{center}
\caption{Comparison of Delta of our approach and the true values under
  the Black-Scholes model.  Model parameters:
  $t=0.5,K=20,\sigma=0.5,r=10\%$. Basepoint: $z=x$.  The left graph
  plots the delta computed by our method and the true delta. The right
  graph plots their difference. Note that in the second figure the
  scale is $10^{-3}$. The x-axis is scaled by 10, that is, the label
  400 means the stock price is 40.}\label{figure delta}
\end{figure}

\begin{figure}
\begin{center}
\includegraphics[width=2.2in]{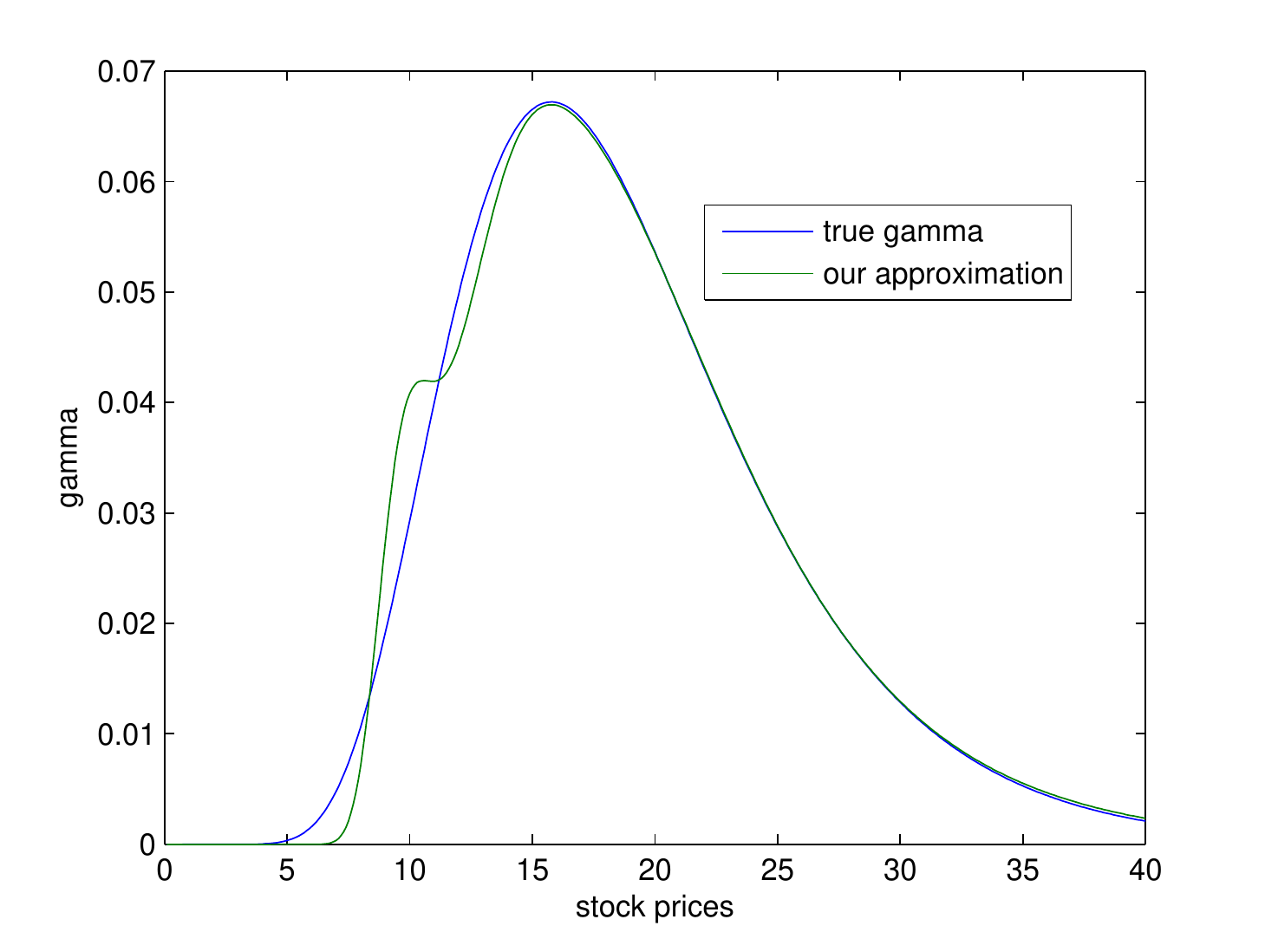}
\includegraphics[width=2.2in]{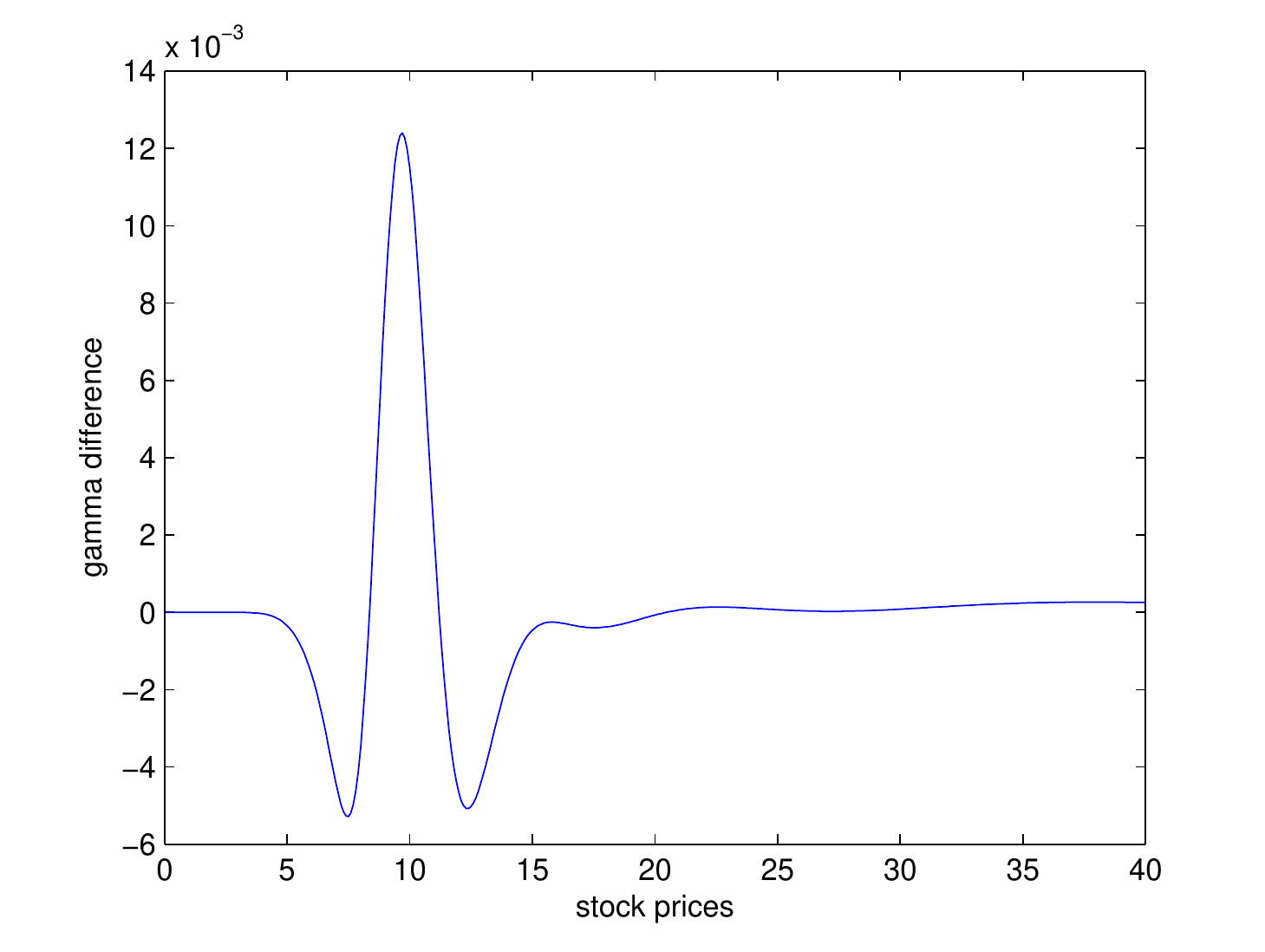}
\end{center}
\caption{Comparison of Gamma of our approach and the true values under
  the Black-Scholes model.  Model parameters:
  $t=0.5,K=20,\sigma=0.5,r=10\%$. Basepoint: $z=x$.  The left graph
  plots the gamma computed by our method and the true gamma. The right
  graph plots their difference. Note that in the second figure the
  scale is $10^{-3}$. The x-axis is scaled by 10, that is, the label
  400 means the stock price is 40.}\label{figure gamma}
\end{figure}

\section{Option pricing with long maturity: the bootstrap
scheme}\label{bootstrap}

The \NAME\ gives an asymptotic expansion of the Green function {\em in
  the limit $t\to 0$}.  Therefore, its accuracy is in priciple limited
to times to maturity $t$ relatively small. For long maturity options,
we expect the error to be possibly large. In this section, we shall
introduce a bootstrap strategy to price options with a long maturity
time. The scheme is based on the properties of the solution
operator. Let us illustrate the bootstrap in the time independent
case. In this case, we recall that the solution operator forms a
semigroup. The semigroup property then gives that
\begin{equation}
  e^{tL}=\left (e^{\frac{t}{n}L}\right )^n, \qquad \forall n \in \NN.
\end{equation}
Then, if $n$ is sufficiently large, \
\ $e^{\frac{t}{n}L}$ will be accurately approximated by our method.

We next describe the bootstrap scheme, which can be rigorously
justified at least for the case of strongly elliptic operators ($a$
bounded away from zero) by the error analysis in \cite{CCMN}.  In the
bootstrap scheme, we use \ $\left (\cG^{[n]}_{t/n}\right )^n$ to
approximate $e^{tL}$, where as before we denote the approximate
solution operator by its kernel \ $\cG^{[n]}_t$.  Suppose now
$\cG^{[n]}_{t/n}$ is the second order approximation, then the error is
in the order \ $O(\left ( \frac{t}{n} \right )^{3/2})$. Because there
are $n$ steps in the bootstrap scheme, the total error is in the order
of
\begin{equation*}
  O\big( ( t/n)^{3/2}\big)\times n\ =\
  O\big(t^{3/2}/\sqrt{n}\big),
\end{equation*}
and consequently, for $t$ fixed, it becomes smaller and smaller as $n$
increases.  A similar analysis shows that the bootstrap strategy with
the first order approximation does not improve accuracy, given that in
this case the error at each step is \ $O(t/n)$, so the total error
after $n$ steps is
\begin{equation*}
  O\big( t/n \big)\times n\ =\
  O(t ),
\end{equation*}
which does not converge to zero as $n\to \infty$.

We numerically tested this scheme for both the option prices and the
Greeks.  In the bootstrap scheme, closed-form approximate solutions
are not available after the first time step, since we integrate the
aproximate Green's function against an expression of the form
\eqref{2nd closed form}, which contains error functions. Therefore, we
must integrate numerically and introduce an additional error due to
the numerical quadrature. This error can be controlled and made
lower-order by choosing the space discretization step small enough. A
further error, which can also be made lower-order, comes from the
truncation of the integration at large $x$.

In the first simulation, we used the Black-Scholes-Merton model. and
set the parameters as time to maturity $t=1$ (one year), strike
$K=20$, risk-free interest rate $r=10\%$, and volatility
$\sigma=0.5$. The left graph of Figure \eqref{figure bootstrap}
displays the error of the first-order-closed form solution, the
second-order closed-form solution, the first-order approximation
with bootstrap, and the second-order approximation with bootstrap.
We truncate the half line $(0,+\infty)$ at 200, and fix the number
of the bootstrap steps as $n=10$, that is the time step is $\Delta
t=0.1$.  We choose the space discretization $\Delta x=0.1$. {From}
the graphs, it is clear that the second-order approximation greatly
improves the accuracy compared with the first-order approximation.
The bootstrap scheme with the second-order approximation reduces the
error even further as expected (See Table \ref{table
bootstrap} for a quantitative error analysis).  As predicted, on
the other hand the bootstrap scheme with the first order
approximation introduces an extra error.

We also notice that around $S=40$ (the label 400 in the graphs) the
error with the second-order bootstrap tends to increase, an effect of
the truncation error. To verify it, we truncate the half line at
400. The right graph of Figure \eqref{bootstrap} shows that the error
does not tend to increase. We also tested the cases when the time to
maturity is two and five years, obtaining similar results.

\begin{figure}
\begin{center}
\includegraphics[width=2.2in]{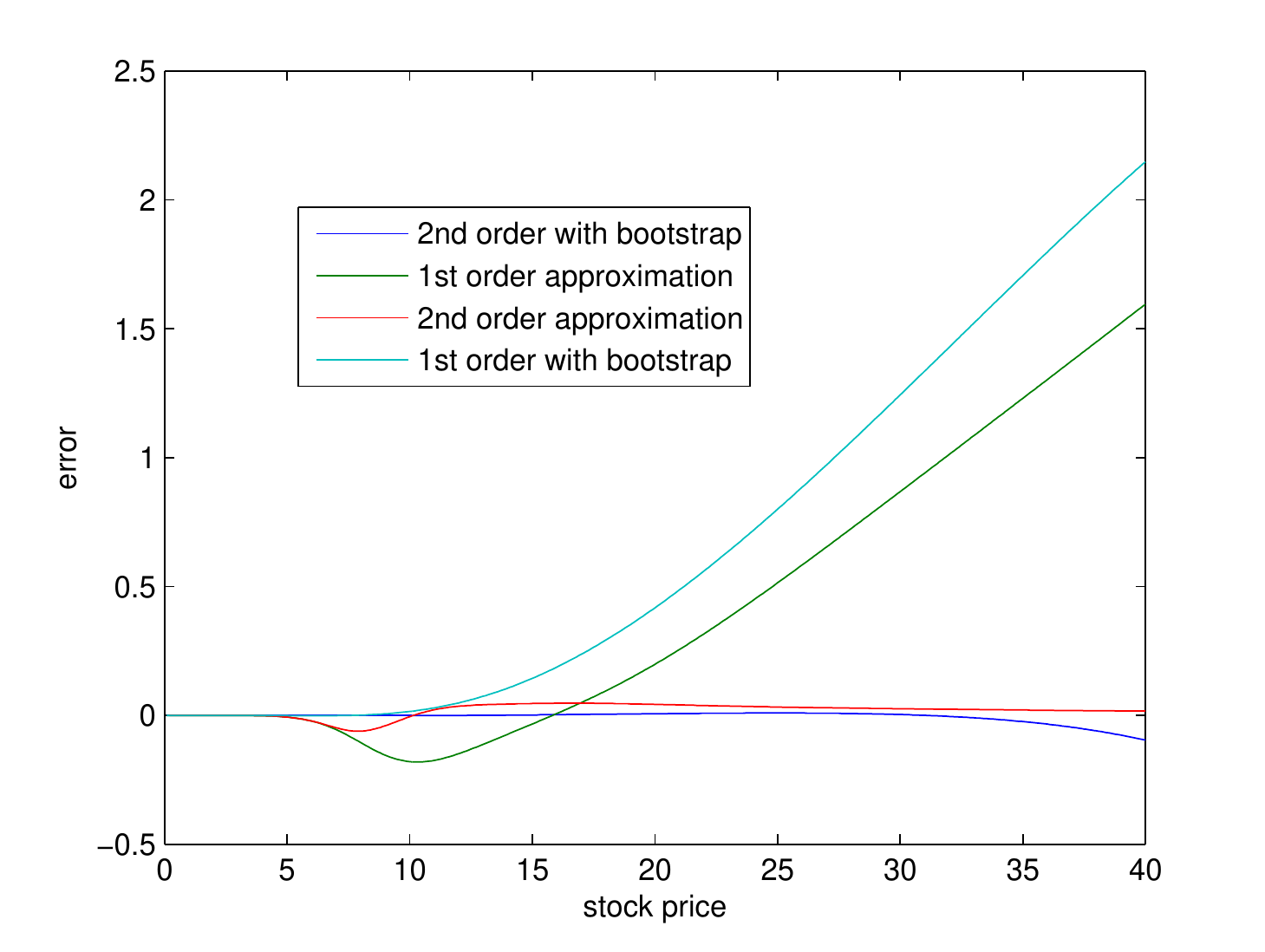}
\includegraphics[width=2.2in]{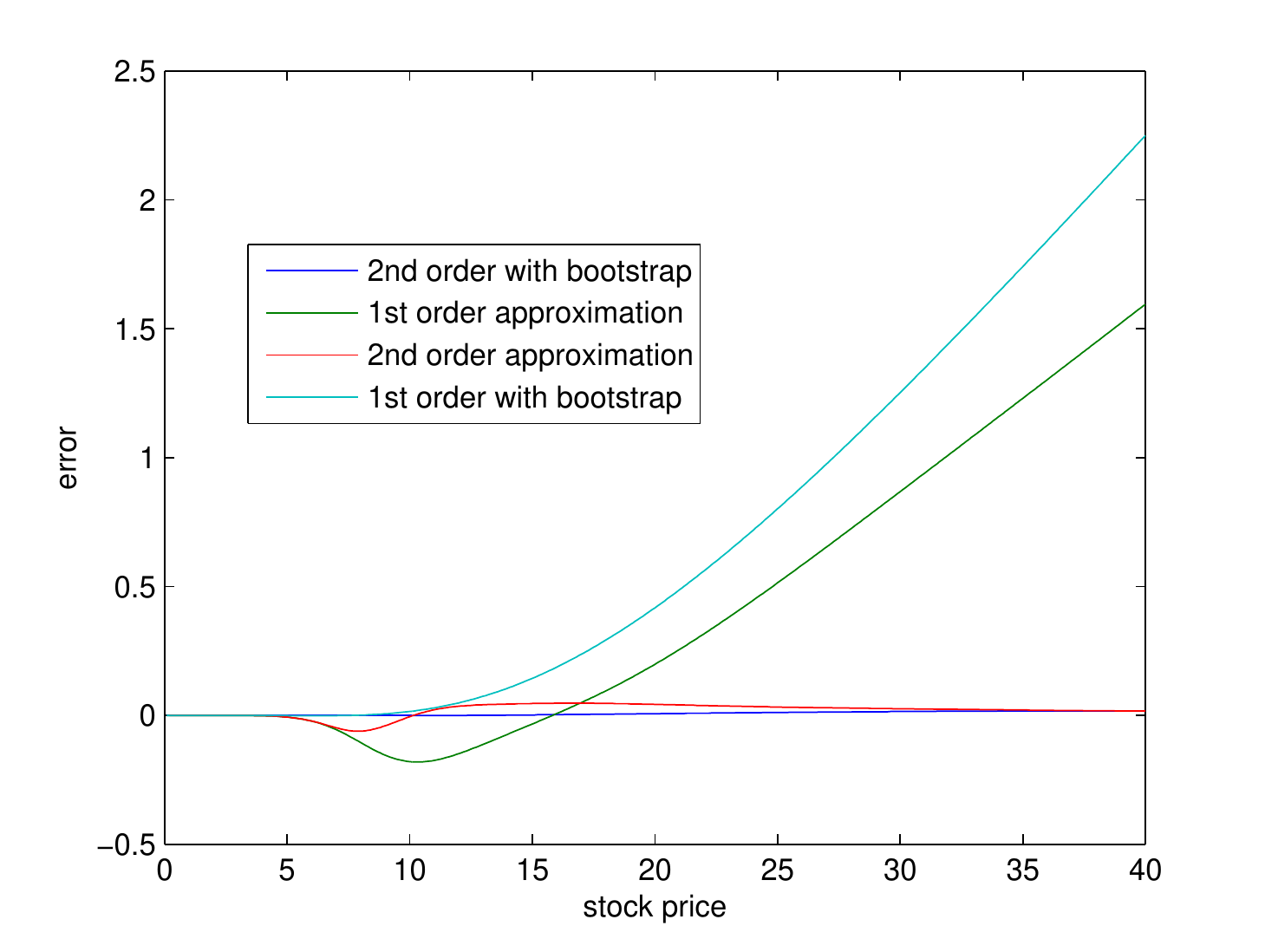}
\end{center}\caption{Comparison of our first and second order approximation with or without bootstrap.
Each curve is the difference with respect to the Black-Scholes
formula. Parameters: $t=1,K=20,r=10\%,\sigma=0.5$. Basepoint: $z=x$.
In our numerical integration, we truncate $(0,\infty)$ at 200 for
the left picture, and at 400 for the right one. Note that the x-axis
scaled by 10, i.e. the label 400 means the stock price is
40.}\label{figure bootstrap}
\end{figure}

To give a sense on how accurate our bootstrap scheme with the second
order approximation is for large time to maturity, we repeat previous
numerical simulation for difference times, and measure the error in
$L^\infty (0,40)$.  We recall that the trike price is at $S=20$, we
are taking a symmetric interval around it, and that the number of
bootstrap steps is fixed at 10. We report the errors in Table
\ref{table bootstrap}.

\begin{table}
\begin{tabular}{|l|l|l|l|l|l|l|}\hline
t& 3&2 & 1 & 0.5 & 0.2 & 0.1 \\\hline
 error &0.0268&0.0379&0.0177&0.0038&4.3682e-004&3.5703e-005\\\hline
\end{tabular}
\medskip
\caption{Errors of the bootstrap scheme for different times under
the Black-Scholes-Merton model. Number of bootstrap steps is fixed
as 10. Parameters: $K=20, \sigma=0.5, r=10\%$. Errors are measured
in $L^\infty (0,40)$, and the benchmark is the Black-Scholes formula
} \label{table bootstrap}
\end{table}

As predicted, we can increase the number of bootstrap steps to obtain
arbitrary accuracy in the aproximation.  Furthermore, for relatively
large $t$ the number of bootstrap steps should be correspondingly
large, so that the compound error from each bootstrap step is under
control at the end. For example, in our tests when $t\geq 4$, it is
not enough to reduce the error by bootstrapping with $10$ steps. Using
only $10$ steps in this case, in fact, introduces additional
errors. For more detail, see \cite{CMN}.

In order to eliminate the effect of the truncation error, we shall
work with a butterfly option in the rest of this section.
Mathematically, a butterfly option corresponds to an intial pay-off
given by a hat function, Figure \eqref{butterfly}. Our method gives
closed-form solution for butterfly options as well, by linearity.
Figure \eqref{bootstrapButterfly} shows the errors of a butterfly
option within the Black-Scholes-Merton model with $K=20$, $K_1=15$,
$K_2=25$ obtained by our first order and second order approximation
with or without bootstrap. The benchmark is the true solution. The
parameters we were using are the same as we mentioned before. Again,
we truncate the half line at $200$. For a butterfly option, the
truncation error is clearly very small, given that the data is
compactly supported (Figure \eqref{bootstrapButterfly}). For the
second order approximation with a bootstrap scheme, the error is
almost zero. It is in the scale of $10^{-3}$, while without the
bootstrap the error is of the scale $10^{-2}$. This coincides with our
theoretical results.

We can also run the simulation as in Table \ref{table bootstrap}, and
we find comparable results.

\begin{figure}
\flushleft
\hspace*{-.3in}
\begin{minipage}[t]{.5\textwidth}
\includegraphics[width=2.6in]{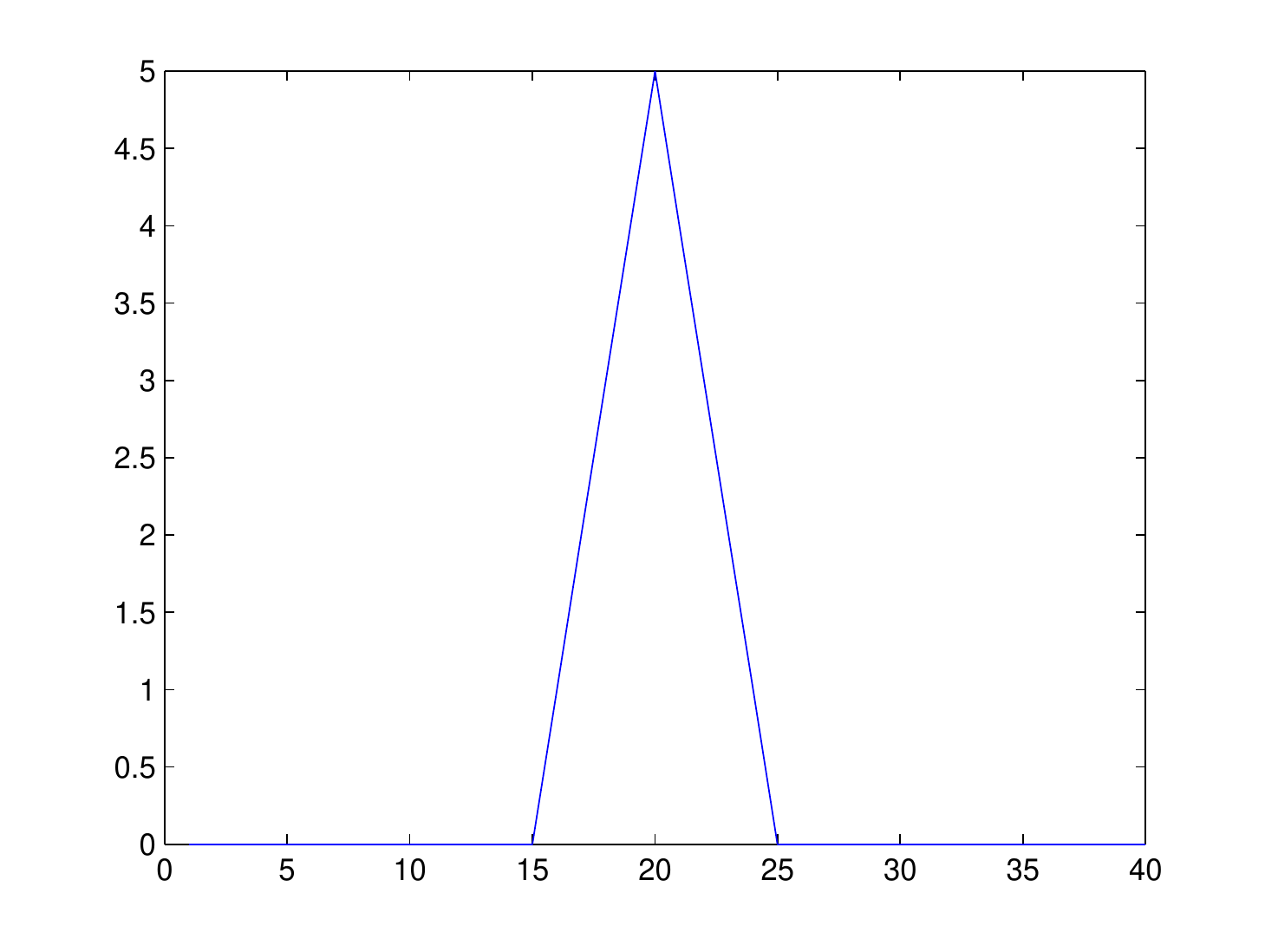}
\hspace*{-.4in}\caption{Butterfly option payoff.}\label{butterfly}
\end{minipage}
\begin{minipage}[t]{.8\textwidth}
\includegraphics[width=2.6in]{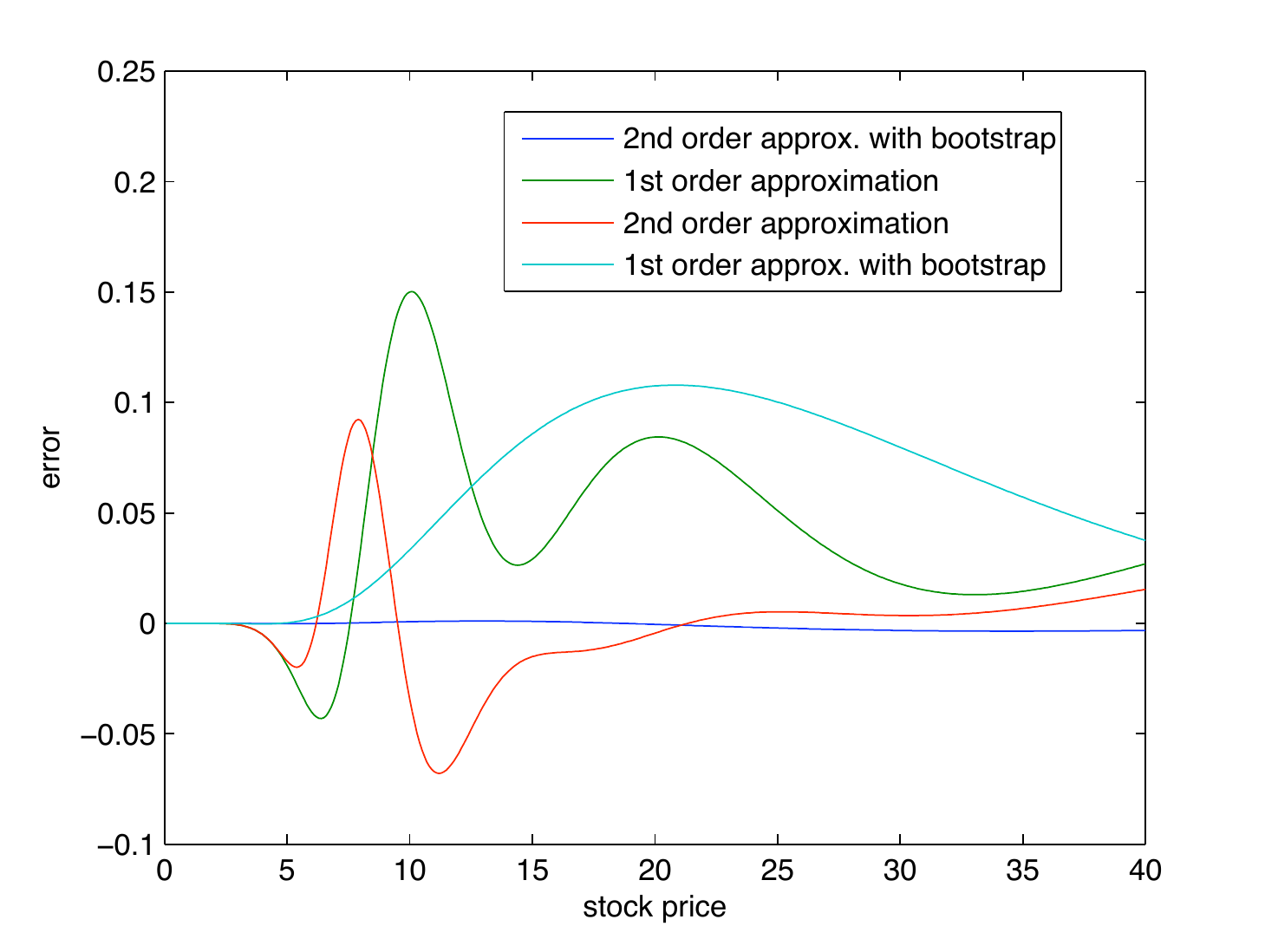}
\hspace{-.4in}\caption{comparison of our first order and second order
approximation with or without bootstrap for a butterfly option when
$t=1,r=10\%,\sigma=0.5$. In our numerical integration we truncate
the half line at 200. Note that the x-axis is scaled by 10, i.e. the
label 400 means the stock price is 40.}\label{bootstrapButterfly}
\end{minipage}\\[20pt]
\end{figure}

We conclude by discussing the bootstrap scheme for the Greeks.
Directly using the closed-form approximation formula to compute the
Greeks for very long maturity time ($t>>1$) is not advisable.  In
fact, our closed-form approximation for call option oscillates near
the strike price, and the oscillation grows with the time to expiry,
as the overall error grows. The appearance of the oscillation is due
to the discontinuity of first derivative of the pay-off function at
the strike price.  This phenomenon is clearly visible for butterfly
options, where the first derivative of the payoff function has three
discontinuities, Figure \eqref{bootstrapButterfly}.  This oscillation
is amplified in the calculation of Greeks. The bootstrap scheme
reduces this oscillation dramatically.

For the numerical simulation, we choose the same parameters as those
in Section \eqref{greeks}.  The small time step ensures very good
error control at each time step. Also, we minimize the truncation
error as before by truncating the integral at $x=400$ and comparing
the approximations only on the interval $[0,40]$ near the strike price
$k=20$. The left graph of Figure \eqref{figure delta bootstrap} plots
the true delta and our approximation in the same picture, and the
right one plots the difference between these two curves, which shows
that the difference between the true value and our approximation is in
the order of $10^{-4}$ with the biggest error around $3.3\times
10^{-4}$. Thus our approximation is quite accurate. For the gamma, we
obtain similar results, see Figure \ref{figure gamma bootstrap}. The
difference is in the order of $10^{-5}$, and the biggest error is
around $5.6\times 10^{-5}$. In both cases, there are no oscillations
on the same scale of the solution.

\begin{figure}
\begin{center}
\includegraphics[width=2.2in]{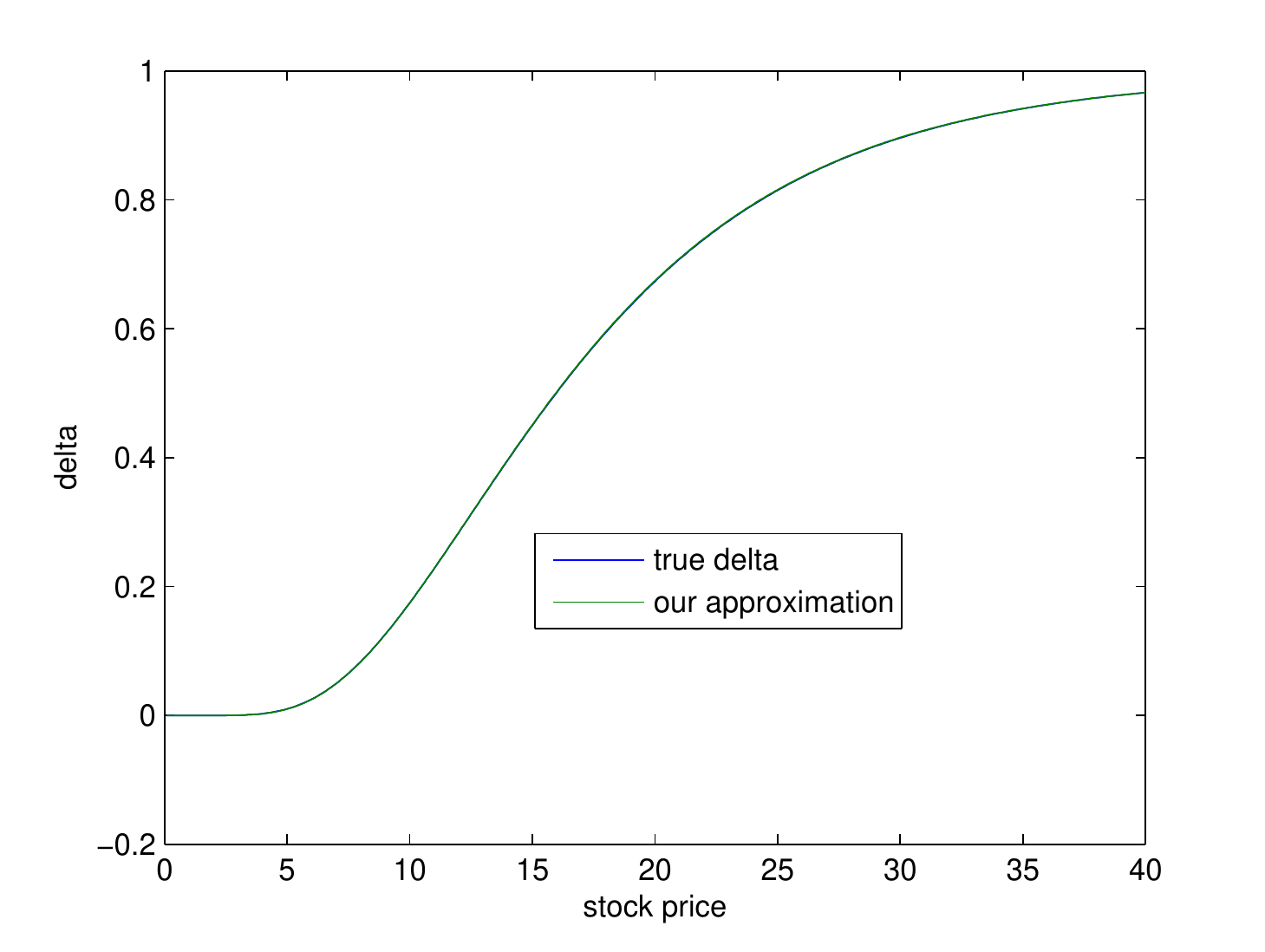}
\includegraphics[width=2.2in]{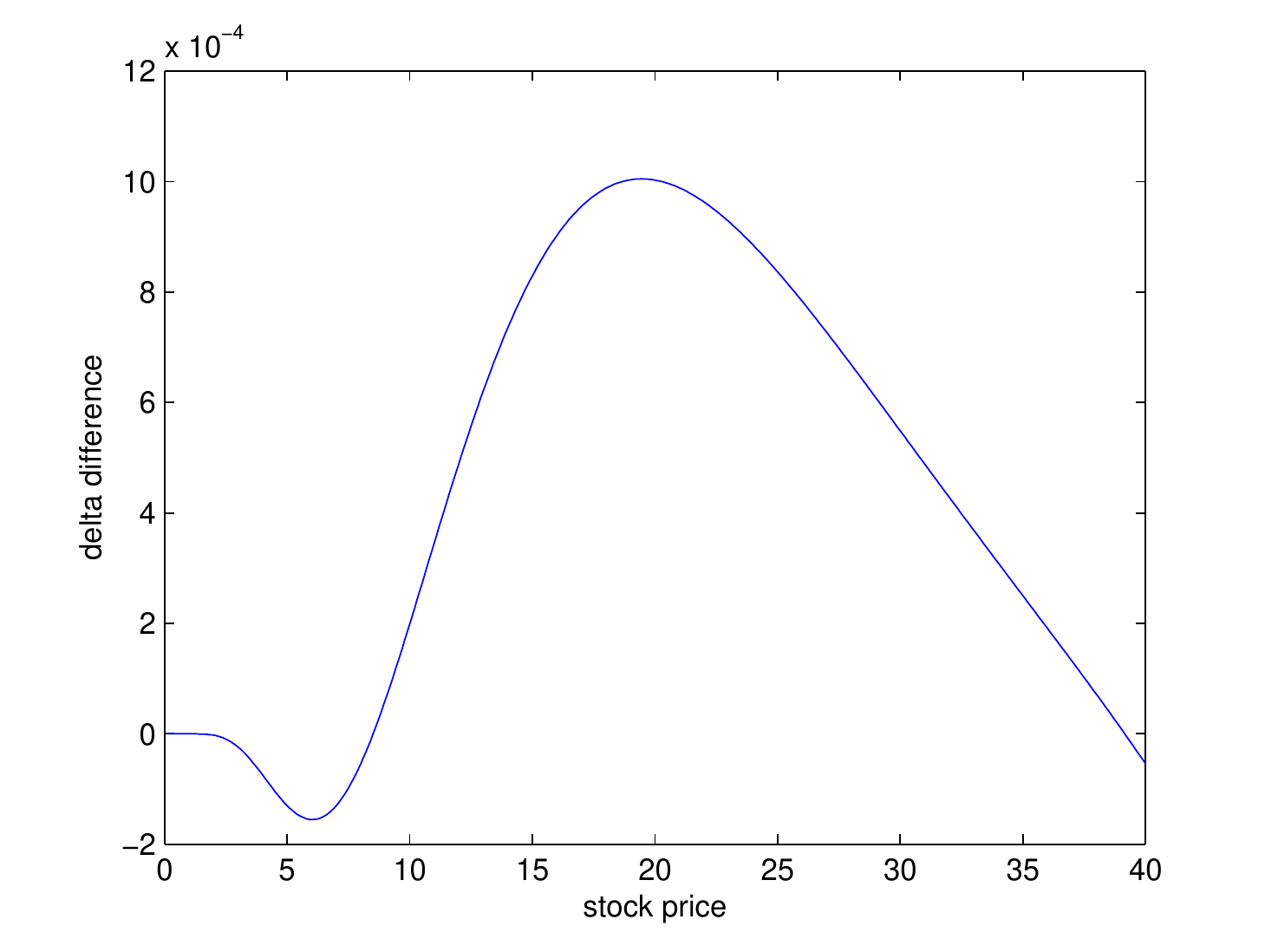}
\end{center} \caption{Comparison of the delta computed by our bootstrap scheme( 2nd order approximation) with
the true values under the Black-Scholes model. Model parameters:
$t=0.5,K=20,\sigma=0.5,r=10\%$. Basepoint: $z=x$. The left graph
plots the delta computed by our bootstrap method and the true delta.
The right graph plots their difference. In our numerical
integration, we truncate the half line at 400. Note that in the
second figure the scale is $10^{-4}$. The x-axis scaled by 10, i.e.
the label 400 means the stock price is 40. }\label{figure delta
bootstrap}
\end{figure}

\begin{figure}
\begin{center}
\includegraphics[width=2.2in]{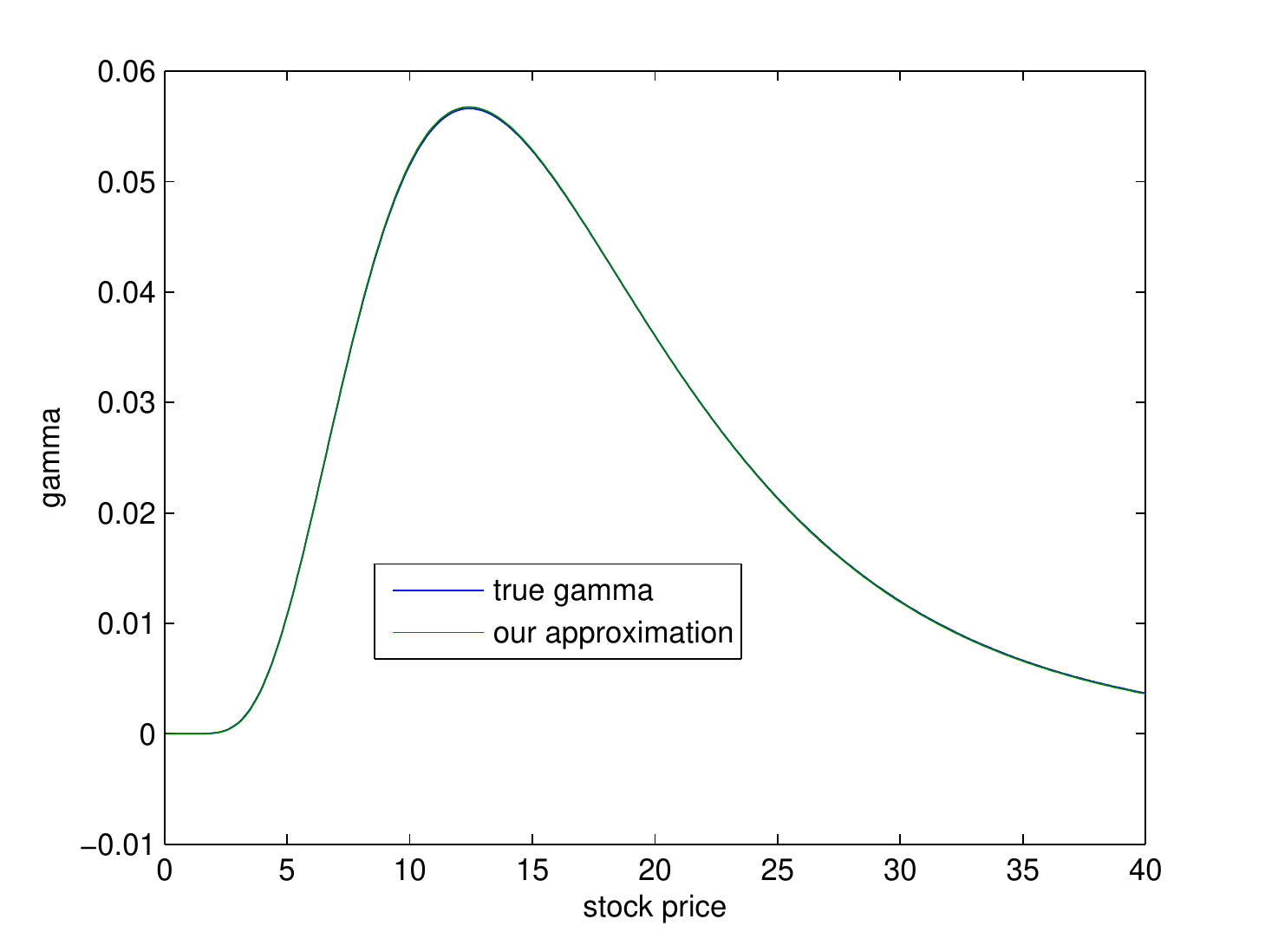}
\includegraphics[width=2.2in]{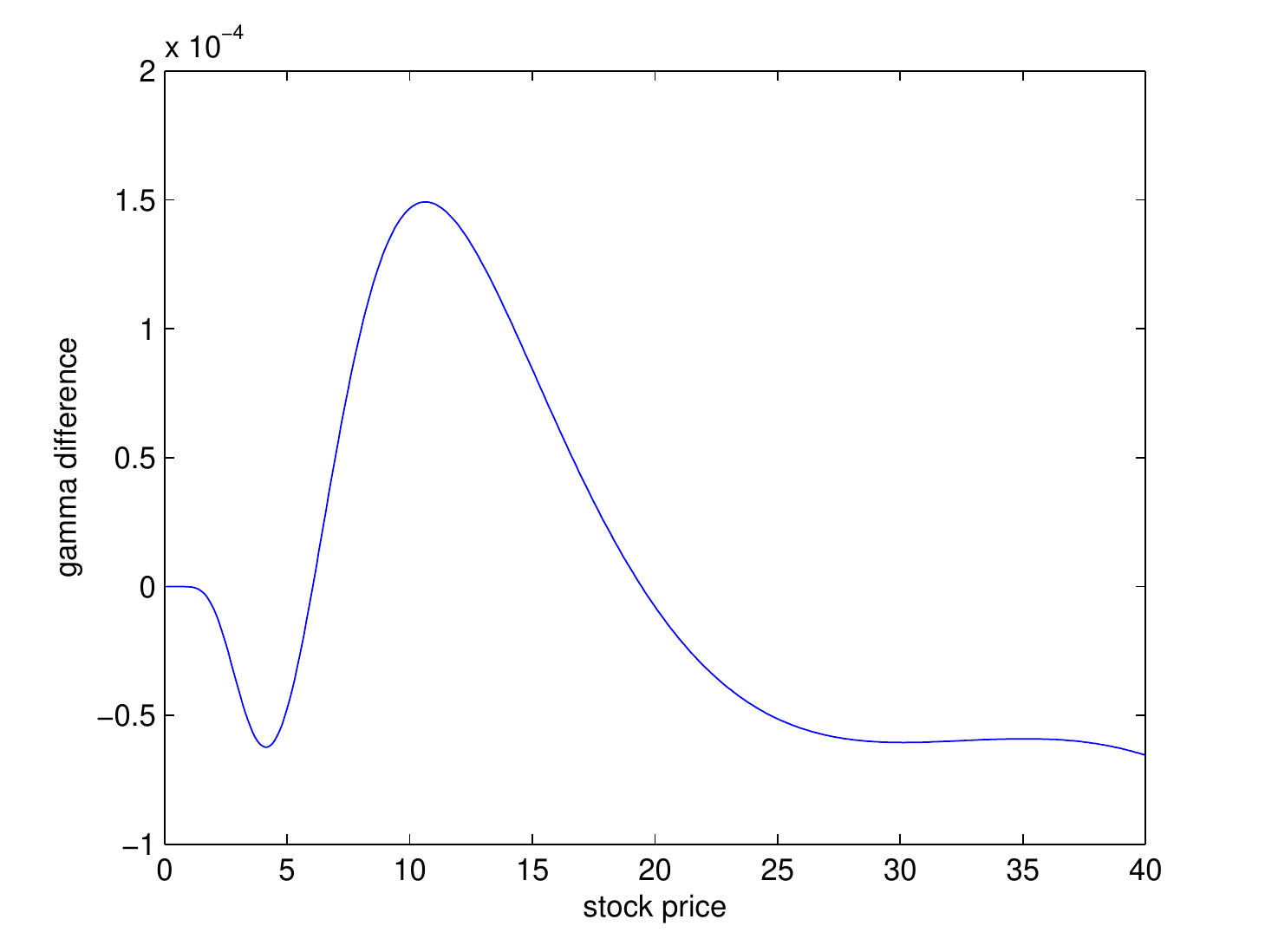}
\end{center}\caption{Comparison of the gamma computed by our bootstrap scheme( 2nd order approximation) with
the true values under the Black-Scholes model. Model parameters:
$t=0.5,K=20,\sigma=0.5,r=10\%$. Basepoint: $z=x$. The left graph
plots the gamma computed by our bootstrap method and the true gamma.
The right graph plots their difference. In our numerical
integration, we truncate the half line at 400. Note that in the
second figure the scale is $10^{-5}$. The x-axis is scaled by 10,
i.e. the label 400 means the stock price is 40.}\label{figure gamma
bootstrap}
\end{figure}




\end{document}